\documentclass[11pt,a4paper]{article}
\usepackage[T1]{fontenc}
\usepackage[margin=2.4cm]{geometry}
\usepackage{amsmath,amssymb,amsthm}
\usepackage{graphicx}
\graphicspath{{figures/}{./}}
\usepackage[numbers,sort&compress]{natbib}
\usepackage{booktabs}
\usepackage{hyperref}
\hypersetup{colorlinks=true,linkcolor=blue,citecolor=blue,urlcolor=blue}

\numberwithin{equation}{section}
\newtheorem{lemma}{Lemma}[section]
\newtheorem{proposition}[lemma]{Proposition}
\newtheorem{theorem}[lemma]{Theorem}
\newtheorem{corollary}[lemma]{Corollary}
\newcommand{\bcdot}{\boldsymbol{\cdot}}
\newcommand{\bnabla}{\boldsymbol{\nabla}}
\newcommand{\PT}{\mathbb{P}}
\newcommand{\Uc}{\boldsymbol{\mathcal{U}}}
\title{Exact Periodic Solutions of the Forced Incompressible\\
Navier--Stokes Equations in Arbitrary Dimensions}

\author{R. K. Michael Thambynayagam\\[2pt]
\normalsize Managing Director (Retired), Schlumberger Cambridge Research
\ \raisebox{-1pt}{\includegraphics[width=0.45cm]{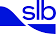}}}
\date{}

\begin{document}
\maketitle

\begin{abstract}
Exact periodic solutions of the unforced incompressible Navier--Stokes equations require the
convective field to be a pure gradient, absorbed into the pressure; for the cyclic trigonometric
families studied previously this occurs only at isolated phases and, over the range so far
classified, only at $n=3$ and $4$. The forced problem, we show, admits an exact construction for
every phase vector and every $n\ge3$. For an $n$-parameter family of two-term cyclic fields,
solenoidal and a Laplacian eigenfunction, the transverse part $\Uc^0$ of its convective field
serves as the body force, giving a closed-form solution with velocity, pressure and force explicit. We prove
closed forms for the pressure and for both magnitudes, in every dimension and with no upper bound
on $n$: the forcing is never trivial, and the $L^2$ magnitude of $\Uc^0$ stands to that of the
whole convective field in the constant ratio $2\sqrt2/3$, whatever the dimension, phase or units.
The forcing does no net work, so the kinetic energy decays purely viscously at arbitrary Reynolds
number, giving an exact benchmark in any dimension. A numerical study of vorticity concentration at
zero applied work uses it as a reference, and shows that apparent saturation can be a resolution
artefact.
\end{abstract}

\medskip
\noindent\textbf{Subject Areas:} applied mathematics, fluid mechanics, mathematical physics

\medskip
\noindent\textbf{Keywords:} Navier--Stokes equations, exact solutions, forced flows, Helmholtz
decomposition, Leray projection, higher dimensions, benchmarks

\medskip
\noindent\textbf{MSC (2020):} 35Q30, 76D05, 35C05

\medskip
\protect{\bfseries{ORCID:}} https://orcid.org/0000-0002-1778-7327

\medskip
\protect{\noindent\textbf{Author for correspondence:} }R. K. Michael Thambynayagam\\
\texttt{michael.thambynayagam@gmail.com}

\section{Introduction}
\label{sec:intro}

Closed-form solutions of the incompressible Navier--Stokes equations are rare, and known examples
are typically obtained by arranging for the nonlinear term to vanish from the velocity dynamics. If
the convective field $\boldsymbol{g}=(\boldsymbol{v}\bcdot\bnabla)\boldsymbol{v}$ is a pure
gradient it can be absorbed entirely into the pressure, leaving the velocity to satisfy a linear
diffusion equation. The Taylor vortex \citep{taylor1923} is the classical instance, and the
periodic families of \citet{ant2020} and \citet{tha2023,tha2026} are constructed on the
same principle.

That principle is demanding. In \citet{tha2026} the condition is imposed on an $n$-parameter family
of cyclic trigonometric fields and the admissible phase vectors are classified for $3\le n\le8$:
they exist at $n=3$ and $n=4$, and not for $5\le n\le8$, with the obstructions dividing according
to the parity of the dimension. Generic phase vectors therefore yield no unforced solution at all.

This paper observes that the same phase space is entirely productive for the \emph{forced} problem,
and that the object responsible for the obstruction is precisely the object that supplies the
forcing: the same transverse component that obstructs the unforced construction supplies exactly
the forcing the forced solution requires. The result needs no iteration, no smallness of the data,
and no condition on the phases.

\subsection{Statement of the problem}
\label{sec:problem}

Throughout, the flow occupies the periodic cell
\begin{equation}
\label{eq:cell}
\Omega=[0,L]^n,
\qquad
\langle h\rangle=\frac{1}{\lvert\Omega\rvert}\int_\Omega h\,\mathrm{d}\boldsymbol{x},
\qquad
E(t)=\tfrac12\bigl\langle\lvert\boldsymbol{v}\rvert^2\bigr\rangle ,
\end{equation}
so that $\langle\cdot\rangle$ is the cell average and $E$ the kinetic energy per unit mass. All
fields are smooth and $L$-periodic in each coordinate, $\rho$ is the constant density and $\kappa$
the kinematic viscosity. The body force $\boldsymbol{f}$ is taken per unit mass, so that it has the
dimensions of acceleration, consistently with the term $-\bnabla p/\rho$ beside it. On $\Omega$,
\begin{equation}
\label{eq:ns}
\frac{\partial\boldsymbol{v}}{\partial t}
= \kappa\Delta\boldsymbol{v} - \bnabla p/\rho - \boldsymbol{g} + \boldsymbol{f},
\qquad
\bnabla\bcdot\boldsymbol{v}=0,
\qquad
\boldsymbol{g}=(\boldsymbol{v}\bcdot\bnabla)\boldsymbol{v}.
\end{equation}
\citet{tha2023} recast \eqref{eq:ns} into viscous, inertial and applied contributions,
\begin{equation}
\label{eq:proj7}
\frac{\partial v_i}{\partial t} = \kappa\Delta v_i - \mathcal{U}_i + \mathcal{F}_i ,
\end{equation}
in which the last two are given by the Newtonian integrals
\begin{align}
\label{eq:U}
\mathcal{U}_i(\boldsymbol{x},t) &= g_i(\boldsymbol{x},t)
 - \frac{\Gamma(n/2)}{2\pi^{n/2}}\int_{\mathbb{R}^n}
   \frac{(x_i-y_i)\sum_{k=1}^{n}\partial g_k(\boldsymbol{y},t)/\partial y_k}
        {\lvert\boldsymbol{x}-\boldsymbol{y}\rvert^{\,n}}\,\mathrm{d}\boldsymbol{y},\\
\label{eq:F}
\mathcal{F}_i(\boldsymbol{x},t) &= f_i(\boldsymbol{x},t)
 - \frac{\Gamma(n/2)}{2\pi^{n/2}}\int_{\mathbb{R}^n}
   \frac{(x_i-y_i)\sum_{k=1}^{n}\partial f_k(\boldsymbol{y},t)/\partial y_k}
        {\lvert\boldsymbol{x}-\boldsymbol{y}\rvert^{\,n}}\,\mathrm{d}\boldsymbol{y}.
\end{align}
In each, the integral is the longitudinal part of the field recovered from its divergence, so that
$\mathcal{U}_i$ and $\mathcal{F}_i$ are what remains after that part is removed. Writing $\PT$ for
the Leray projection \citep{leray1934} onto divergence-free fields, this says simply
\begin{equation}
\label{eq:UF}
\Uc = \PT\boldsymbol{g},
\qquad
\boldsymbol{\mathcal{F}} = \PT\boldsymbol{f}.
\end{equation}

Equations \eqref{eq:U}--\eqref{eq:F} and \eqref{eq:UF} are one operator seen from two sides, and
the correspondence is worth setting down: it fixes the notation inherited from
\citet{tha2023,tha2026}, and the two forms suit different domains. The kernel in
\eqref{eq:U}--\eqref{eq:F} is built from the whole-space Green's function of the Laplacian, and the
representation it gives is the one appropriate to fields decaying at infinity, which is the setting
of the earlier work. Periodic fields do not decay, so on $\Omega$ these integrals are not invoked as
absolutely convergent Newtonian integrals. Their role is taken, as in \citet{tha2026}, by the
equivalent Fourier representation of the Helmholtz projection on the periodic cell, understood mode
by mode: with
$\boldsymbol{w}(\boldsymbol{x})=\sum_{\boldsymbol{k}}\widehat{\boldsymbol{w}}
(\boldsymbol{k})e^{\mathrm{i}\boldsymbol{k}\bcdot\boldsymbol{x}}$ over the dual lattice
$\boldsymbol{k}\in(2\pi/L)\mathbb{Z}^n$,
\begin{equation}
\label{eq:leray}
\widehat{\PT\boldsymbol{w}}(\boldsymbol{k})
=\left(I-\frac{\boldsymbol{k}\boldsymbol{k}^{\mathsf T}}{\lvert\boldsymbol{k}\rvert^{2}}\right)
\widehat{\boldsymbol{w}}(\boldsymbol{k}),
\qquad \boldsymbol{k}\neq\boldsymbol{0},
\qquad
\widehat{\PT\boldsymbol{w}}(\boldsymbol{0})=\widehat{\boldsymbol{w}}(\boldsymbol{0}).
\end{equation}
The zero mode is left untouched: a constant field is divergence-free and carries no pressure
gradient, so it belongs to the solenoidal part. For the convective field the point is immaterial in
any case, since incompressibility gives
$g_i=v_j\,\partial v_i/\partial x_j=\partial(v_iv_j)/\partial x_j$, and the cell mean of a
derivative of a periodic function vanishes, so $\langle\boldsymbol{g}\rangle=\boldsymbol{0}$
identically. Equation \eqref{eq:leray} is the definition of $\PT$ used throughout this paper;
\eqref{eq:U}--\eqref{eq:F} record where it comes from, and allow the reader to pass between the
notation used here and that of \citet{tha2023,tha2026} without re-deriving the correspondence.

The reformulation shows that when the inertial contribution vanishes,
\begin{equation}
\label{eq:cond}
\Uc \;\equiv\; \boldsymbol{0},
\end{equation}
the velocity equation collapses to the Cauchy diffusion equation and the pressure is recovered
from the pressure Poisson equation \eqref{eq:ppe}, equivalently from the longitudinal component of
$\boldsymbol{g}-\boldsymbol{f}$. The remaining, longitudinal parts of $\boldsymbol{g}$ and $\boldsymbol{f}$ are
absorbed into the pressure and do not act on the velocity; this is why only \eqref{eq:UF} enters
what follows. We import \eqref{eq:UF} and \eqref{eq:cond} from \citet{tha2023} and
\citet{tha2026}, where \eqref{eq:cond} is stated equivalently as a Newtonian integral identity and
its solutions are classified for $3\le n\le8$.

The question addressed here is what can be said when \eqref{eq:cond} fails. The answer is that the
failure is itself constructive, and uniformly so.

The paper has two objectives, and they are of different kinds. The first is analytical: to
establish an exact forced solution and to evaluate it in closed form, so that it can serve as a
benchmark in any dimension. Sections~\ref{sec:proj}--\ref{sec:dims} and \ref{sec:bench} carry this
out. The second is computational: to use that exact solution as a controlled reference for a
numerical study of vorticity concentration in which the energy injected by the force is held at
zero exactly. Section~\ref{sec:conc} reports it. The two are connected---the second is possible
only because the first identifies the transverse part explicitly and shows that balancing it does
no work---but they are logically independent, and a reader interested only in the exact solution may
stop at the end of Section~\ref{sec:bench}.

It is worth separating two levels of generality at the outset. Theorem~\ref{thm:main} is not tied
to any particular field: it holds for \emph{any} smooth, solenoidal, periodic Laplacian
eigenfunction, in any dimension. The explicit formulas---the velocity family \eqref{eq:family}, its
pressure, its body force and their magnitudes---belong to one convenient choice of such a field.
Neither level carries a restriction on the dimension. Section~\ref{sec:dims} proves the closed forms
for that family for every $n\ge3$, with no upper bound: the cyclic construction has a finite
interaction range, so each component of the convective field involves five coordinates whatever $n$
may be, and the quantities computed from it are therefore the same functions in every dimension.
This is the point at which the present paper differs most sharply from \citet{tha2026}, where the
admissible phases had to be determined dimension by dimension and the classification stops at the
largest case examined.

\section{The projected equations}
\label{sec:proj}

Taking the divergence of \eqref{eq:ns} gives the pressure Poisson equation
\begin{equation}
\label{eq:ppe}
\Delta p = \rho\,\bnabla\bcdot(\boldsymbol{f}-\boldsymbol{g}).
\end{equation}
Any smooth periodic field splits uniquely into a gradient and a divergence-free remainder,
$\boldsymbol{w}=\bnabla\phi+\boldsymbol{h}$ with $\bnabla\bcdot\boldsymbol{h}=0$: the Helmholtz
decomposition. On the torus the uniqueness requires the usual normalisation of the potential,
$\langle\phi\rangle=0$, and the constant Fourier mode belongs to $\boldsymbol{h}$, in accordance
with \eqref{eq:leray}. In Fourier variables, each mode is associated with a wavevector
$\boldsymbol{k}=(k_1,\dots,k_n)$, which specifies its spatial frequencies. The longitudinal part of
the mode is its component parallel to $\boldsymbol{k}$, while $\PT$ returns the component
perpendicular to $\boldsymbol{k}$. Comparing with \eqref{eq:ppe}, the pressure
gradient accounts for the longitudinal part of $\boldsymbol{g}-\boldsymbol{f}$, and \eqref{eq:ns}
is equivalent to
\begin{equation}
\label{eq:proj}
\frac{\partial\boldsymbol{v}}{\partial t}
= \kappa\Delta\boldsymbol{v} - \PT\boldsymbol{g} + \PT\boldsymbol{f},
\end{equation}
with the pressure eliminated. Condition \eqref{eq:cond} is now transparent, and so is the
construction of Section~\ref{sec:main}: if $\boldsymbol{f}$ is chosen so that
$\PT\boldsymbol{f}=\PT\boldsymbol{g}$, the two terms cancel and the velocity again satisfies pure
diffusion---this time without any condition on the field.

\section{The construction in $n$ dimensions}
\label{sec:family}

We need a family of initial fields that is defined in every dimension, is solenoidal without
having to solve anything, and is simple enough that $\Uc$ can be computed in closed form. The
first requirement rules out the stream-function representation, which uses the vector-valued curl
and does not extend beyond three dimensions. The second is met by a cyclic construction: if each
component is built from the same pattern of factors, shifted by one coordinate, the terms of
$\bnabla\bcdot\boldsymbol{v}^0$ can be made to cancel in pairs under a relabelling of the summation
index, as Lemma~\ref{lem:sol} shows. The third suggests using as few factors as possible, so that
the convective field stays within reach of hand computation; within the cyclic constructions
considered here, two factors per term is the smallest number that produces a non-trivial $\Uc$.

These considerations lead to the following. Let $\alpha=2\pi/L$ be the wavenumber, $L$ the period,
$v_r$ a reference velocity, and $\boldsymbol{\xi}=(\xi_1,\dots,\xi_n)$ a vector of phase angles.
Following \citet{tha2026}, subscripts on the coordinates are read cyclically: for any integer $m$,
\begin{equation}
\label{eq:cyc}
\langle m\rangle \;:=\; \bigl((m-1)\bmod n\bigr)+1 \;\in\;\{1,\dots,n\},
\end{equation}
so that $x_{\langle 0\rangle}=x_n$ and $x_{\langle n+1\rangle}=x_1$. With this convention, define
\begin{align}
\label{eq:family}
\frac{v_i^0(\boldsymbol{x},\boldsymbol{\xi})}{v_r}
&= \sin\!\left(\alpha x_{\langle i\rangle}+\xi_{\langle i\rangle}\right)
   \sin\!\left(\alpha x_{\langle i-1\rangle}+\xi_{\langle i-1\rangle}\right)
\nonumber\\
&\quad
 + \cos\!\left(\alpha x_{\langle i\rangle}+\xi_{\langle i\rangle}\right)
   \cos\!\left(\alpha x_{\langle i+1\rangle}+\xi_{\langle i+1\rangle}\right),
\qquad i=1,\dots,n .
\end{align}
The bracket in \eqref{eq:cyc} is what makes the definition well posed at the ends of the range:
at $i=1$ the first product calls for $x_{\langle 0\rangle}$, which is $x_n$, and at $i=n$ the
second calls for $x_{\langle n+1\rangle}$, which is $x_1$. At $n=4$, for example, \eqref{eq:family} gives
$v_1^0/v_r=\sin(\alpha x_1+\xi_1)\sin(\alpha x_4+\xi_4)+\cos(\alpha x_1+\xi_1)\cos(\alpha
x_2+\xi_2)$ for the first component and
$v_4^0/v_r=\sin(\alpha x_4+\xi_4)\sin(\alpha x_3+\xi_3)+\cos(\alpha x_4+\xi_4)\cos(\alpha
x_1+\xi_1)$ for the last, these being the two components in which the bracket acts non-trivially.

Each component is thus a sum of two products, the first coupling $x_i$ to its predecessor and the
second to its successor, so that the family is an $n$-parameter set of two-term cyclic
trigonometric products defined in every dimension. It plays here the role that the $n$-fold cyclic
ansatz plays in \citet{tha2026}, but it is a different family: there the aim was to satisfy
\eqref{eq:cond}, and the ansatz is a difference of two $n$-fold products; here the aim is to
violate it as simply as possible, and two factors suffice.

\begin{lemma}[Solenoidality]
\label{lem:sol}
The field \eqref{eq:family} is divergence-free for every $n\ge2$ and every $\boldsymbol{\xi}$.
\end{lemma}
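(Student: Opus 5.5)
The plan is a direct computation of $\bnabla\bcdot\boldsymbol{v}^0$, followed by a cyclic relabelling of the summation index. For brevity I write $\theta_j:=\alpha x_{\langle j\rangle}+\xi_{\langle j\rangle}$, so that \eqref{eq:family} reads $v_i^0/v_r=\sin\theta_i\sin\theta_{i-1}+\cos\theta_i\cos\theta_{i+1}$.

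The first step is to differentiate the $i$th component with respect to its own coordinate $x_i$. For this one needs that the only factor of $v_i^0$ depending on $x_i$ is the one containing $\theta_i$. This holds because, for $n\ge2$, the indices $\langle i-1\rangle$ and $\langle i+1\rangle$ both differ from $\langle i\rangle$ modulo $n$. At $n=2$ they coincide with each other, but this is harmless, since neither equals $i$. This is exactly where the hypothesis $n\ge2$ enters: at $n=1$ all three indices collapse and the computation below fails. Hence
\begin{equation*}
\frac{1}{\alpha v_r}\frac{\partial v_i^0}{\partial x_i}
=\cos\theta_i\sin\theta_{i-1}-\sin\theta_i\cos\theta_{i+1}.
\end{equation*}

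The second step is to sum over $i=1,\dots,n$. In the second sum I substitute $i\mapsto i-1$. Because the bracket \eqref{eq:cyc} makes $i\mapsto\langle i-1\rangle$ a bijection of $\{1,\dots,n\}$, and $\theta_j$ depends only on $\langle j\rangle$, this turns $\sum_i\sin\theta_i\cos\theta_{i+1}$ into $\sum_i\sin\theta_{i-1}\cos\theta_i$. That is identical term by term to the first sum, so the two sums cancel and $\bnabla\bcdot\boldsymbol{v}^0=0$ for every phase vector $\boldsymbol{\xi}$.

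The only point needing care is the bookkeeping at the ends of the range. The reindexing is legitimate precisely because the subscripts are read modulo $n$, so the boundary terms $i=1$ (which calls for $x_n$) and $i=n$ (which calls for $x_1$) are absorbed rather than left over. I would also check the case $n=2$ explicitly, confirming that the coincidence $\langle i-1\rangle=\langle i+1\rangle$ does not affect the derivative. Beyond this the argument is a one-line trigonometric identity, and no step presents a genuine obstacle.
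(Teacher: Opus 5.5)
Your proposal is correct and follows the paper's proof essentially verbatim: you differentiate $v_i^0$ in its own coordinate, sum over $i$, and cyclically reindex one of the two sums so that they cancel term by term (the paper shifts the first sum, you shift the second, which is immaterial). Your explicit check that $\langle i\pm1\rangle\neq\langle i\rangle$ for $n\ge2$ justifies the derivative formula, a step the paper leaves implicit; note only that at $n=1$ the field reduces to the constant $v_r$, so there it is just the derivation that breaks down, not the divergence-free conclusion.
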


\begin{proof}
Differentiating,
\[
\frac{1}{\alpha v_r}\frac{\partial v_i^0}{\partial x_i}
= \cos\!\left(\alpha x_{\langle i\rangle}+\xi_{\langle i\rangle}\right)
  \sin\!\left(\alpha x_{\langle i-1\rangle}+\xi_{\langle i-1\rangle}\right)
- \sin\!\left(\alpha x_{\langle i\rangle}+\xi_{\langle i\rangle}\right)
  \cos\!\left(\alpha x_{\langle i+1\rangle}+\xi_{\langle i+1\rangle}\right).
\]
Summing over $i$ and reindexing the first sum by $j=i-1$ carries it into
$\sum_j\sin(\alpha x_{\langle j\rangle}+\xi_{\langle j\rangle})
\cos(\alpha x_{\langle j+1\rangle}+\xi_{\langle j+1\rangle})$, which is the second sum term by term.
The two cancel identically.
\end{proof}

\begin{lemma}[Eigenfunction property]
\label{lem:eig}
The field \eqref{eq:family} satisfies $\Delta\boldsymbol{v}^0=-2\alpha^2\boldsymbol{v}^0$ for every
$n$ and every $\boldsymbol{\xi}$.
\end{lemma}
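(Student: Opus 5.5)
The plan is to check the identity term by term, since the Laplacian is linear and each component $v_i^0$ in \eqref{eq:family} is a sum of two products. Each product has the form $s(\alpha x_a+\xi_a)\,c(\alpha x_b+\xi_b)$ with $s,c\in\{\sin,\cos\}$, and the key point is that its two factors depend on two \emph{distinct} coordinates $x_a$ and $x_b$. For such a product the Laplacian reduces to $\partial^2/\partial x_a^2+\partial^2/\partial x_b^2$. Each factor is an eigenfunction of the corresponding one-dimensional second derivative with eigenvalue $-\alpha^2$. The product rule then gives $-\alpha^2-\alpha^2=-2\alpha^2$ times the product, because the mixed terms vanish when the factors depend on different variables.

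Concretely, I would write the first product in $v_i^0$ as $\sin(\alpha x_{\langle i\rangle}+\xi_{\langle i\rangle})\sin(\alpha x_{\langle i-1\rangle}+\xi_{\langle i-1\rangle})$. Differentiating twice in $x_{\langle i\rangle}$ gives $-\alpha^2$ times it, and likewise in $x_{\langle i-1\rangle}$. Every other coordinate contributes zero. The second product $\cos(\alpha x_{\langle i\rangle}+\xi_{\langle i\rangle})\cos(\alpha x_{\langle i+1\rangle}+\xi_{\langle i+1\rangle})$ is handled the same way. Adding the two results gives $\Delta v_i^0=-2\alpha^2 v_i^0$ for each $i$, with no dependence on $\boldsymbol{\xi}$, because the phases only shift the arguments.

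The one step needing care is the distinctness of the indices under the cyclic convention \eqref{eq:cyc}: we need $\langle i-1\rangle\neq\langle i\rangle$ and $\langle i+1\rangle\neq\langle i\rangle$. Both hold whenever $n\ge2$, since $i\pm1\not\equiv i \pmod n$. This is the only place the dimension enters. The statement's ``every $n$'' should be read within the family's range of definition, $n\ge2$, as in Lemma~\ref{lem:sol}. At $n=1$ the two factors would share a coordinate and the eigenvalue would change, for example $\sin^2$ picks up a constant term. For $n=2$ one has $\langle i-1\rangle=\langle i+1\rangle$, but each product still involves two distinct coordinates, so the argument goes through unchanged. I do not expect a genuine obstacle beyond noting this point.
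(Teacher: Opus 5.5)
Your proof is correct and follows the paper's own argument: each of the two products consists of two factors of wavenumber $\alpha$ in distinct coordinates, so it is annihilated by $\Delta+2\alpha^2$, and linearity gives the result. Your explicit check that $\langle i\pm1\rangle\neq\langle i\rangle$ whenever $n\ge2$, with the remark that this fails at $n=1$, makes precise what the paper compresses into the phrase ``in distinct coordinates''.
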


\begin{proof}
Each of the two products contains exactly two factors of wavenumber $\alpha$, in distinct
coordinates, so each is annihilated by $\Delta+2\alpha^2$; the result follows by linearity.
\end{proof}

Two features of \eqref{eq:family} should be noted, since both differ from the construction of
\citet{tha2026} and a reader coming from that work will expect otherwise.

First, each component of \eqref{eq:family} depends on three coordinates only---$x_{\langle
i-1\rangle}$, $x_{\langle i\rangle}$ and $x_{\langle i+1\rangle}$---and not on all $n$. This is a
marked contrast with the $n$-fold cyclic ansatz of \citet{tha2026}, in which every component
depends on every coordinate. At $n=3$ that distinction disappears, since three consecutive indices
exhaust the coordinates; for $n\ge4$ the present field couples each direction only to its two
cyclic neighbours. The two constructions remain different families even at $n=3$---one is a
difference of $n$-fold products, the other a sum of two---and it is only the
three-coordinates-versus-all-$n$ contrast that ceases to bite there. The field is nonetheless genuinely
$n$-dimensional: it is defined on the $n$-torus, all $n$ coordinates appear across the components,
all $n$ phases are free, and the divergence cancels only because the chain of couplings closes on
itself. What the shorter products buy is that $\boldsymbol{g}^0$ and $\Uc^0$ remain tractable in
closed form as $n$ grows, which is what makes the dimensional results of Section~\ref{sec:dims}
available at all.

Second, \eqref{eq:family} is not the only possible choice, and nothing in the paper depends on its
being so. Theorem~\ref{thm:main} requires only that $\boldsymbol{v}^0$ be smooth, solenoidal,
periodic and a Laplacian eigenfunction; any such field furnishes an exact forced solution, and we
have verified the theorem numerically for randomly generated fields on several distinct wavenumber
shells as well as for \eqref{eq:family}. Products involving all $n$ factors return one to the type of family considered in
\citet{tha2026}, which is constructed to satisfy \eqref{eq:cond} rather than to violate it;
within the cyclic constructions considered here, a single factor per term does not yield a
solenoidal field with non-trivial $\Uc^0$. Two factors is the smallest number for which we obtain
both, together with the closed forms of Section~\ref{sec:dims}, and that is the sense in which
\eqref{eq:family} is distinguished: it is convenient, not canonical.

Both lemmas hold for every $n\ge3$ and every phase vector, and have been independently verified
symbolically for $3\le n\le8$; that range, and the others quoted in Section~\ref{sec:repro}, are
the extents of the independent checks, not limitations on the results, which are proved as stated.
We note in particular that the eigenvalue in Lemma~\ref{lem:eig} is
$-2\alpha^2$, independent of $n$, so the diffusive timescale does not vary with dimension.

\section{The forced solution}
\label{sec:main}

\begin{theorem}
\label{thm:main}
Let $\boldsymbol{v}^0$ be any smooth solenoidal periodic field satisfying
$\Delta\boldsymbol{v}^0=-\lambda\boldsymbol{v}^0$, and set
$\boldsymbol{g}^0=(\boldsymbol{v}^0\bcdot\bnabla)\boldsymbol{v}^0$ and $\Uc^0=\PT\boldsymbol{g}^0$.
Then
\begin{equation}
\label{eq:soln}
\boldsymbol{v}(\boldsymbol{x},t) = \boldsymbol{v}^0(\boldsymbol{x})\,e^{-\lambda\kappa t},
\qquad
p(\boldsymbol{x},t) = p^0(\boldsymbol{x})\,e^{-2\lambda\kappa t},
\qquad
\boldsymbol{f}(\boldsymbol{x},t) = \Uc^0(\boldsymbol{x})\,e^{-2\lambda\kappa t},
\end{equation}
in which $p^0$ is the zero-mean solution of
$\Delta p^0=-\rho\,\bnabla\bcdot\boldsymbol{g}^0$, is an exact solution of
the forced system \eqref{eq:ns}, the pressure being determined up to an arbitrary additive function
of time. The forcing does no net work,
\begin{equation}
\label{eq:work}
\int_\Omega \boldsymbol{f}\bcdot\boldsymbol{v}\,\mathrm{d}\boldsymbol{x}=0
\qquad\text{for all }t\ge0,
\end{equation}
and the kinetic energy decays purely viscously,
\begin{equation}
\label{eq:Edecay}
E(t)=E(0)\,e^{-2\lambda\kappa t}.
\end{equation}
\end{theorem}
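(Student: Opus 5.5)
The plan is to verify \eqref{eq:ns} directly by substituting the ansatz \eqref{eq:soln}, using the Helmholtz splitting of Section~\ref{sec:proj}.

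\emph{Time dependence.} The convective field of $\boldsymbol{v}$ is quadratic in the velocity, so
$\boldsymbol{g}=(\boldsymbol{v}\bcdot\bnabla)\boldsymbol{v}=\boldsymbol{g}^0e^{-2\lambda\kappa t}$.
Incompressibility is inherited from $\boldsymbol{v}^0$. By the eigenfunction hypothesis,
$\partial_t\boldsymbol{v}=-\lambda\kappa\boldsymbol{v}=\kappa\Delta\boldsymbol{v}$. The viscous and unsteady terms therefore cancel, and \eqref{eq:ns} reduces to the single requirement
\[
\bnabla p/\rho=\boldsymbol{f}-\boldsymbol{g}=\bigl(\Uc^0-\boldsymbol{g}^0\bigr)e^{-2\lambda\kappa t}.
\]

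\emph{The pressure.} By \eqref{eq:leray}, $\boldsymbol{g}^0-\PT\boldsymbol{g}^0$ is, mode by mode, the component parallel to $\boldsymbol{k}$ (for $\boldsymbol{k}\neq\boldsymbol{0}$). Its zero mode vanishes: $\langle\boldsymbol{g}^0\rangle=\boldsymbol{0}$, because $g^0_i=\partial_j(v^0_iv^0_j)$. Hence
$\boldsymbol{g}^0-\Uc^0=\bnabla\phi$, where $\widehat\phi(\boldsymbol{k})=-\mathrm{i}\,\boldsymbol{k}\bcdot\widehat{\boldsymbol{g}^0}/|\boldsymbol{k}|^2$ and $\widehat\phi(\boldsymbol{0})=0$. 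Taking the divergence gives $\Delta\phi=\bnabla\bcdot\boldsymbol{g}^0$. So $p^0:=-\rho\phi$ is the zero-mean solution of $\Delta p^0=-\rho\bnabla\bcdot\boldsymbol{g}^0$, and it is unique, since zero-mean harmonic periodic functions vanish. Then $\bnabla p^0/\rho=\Uc^0-\boldsymbol{g}^0$, which is exactly the requirement. Smoothness follows because $\boldsymbol{v}^0$ is smooth, so the Fourier coefficients decay rapidly. Adding any function of $t$ to $p$ leaves $\bnabla p$ unchanged, which accounts for the stated non-uniqueness.

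\emph{No net work.} Use Parseval on the torus. Since $\Uc^0$ is built from the mode-wise projections, for each $\boldsymbol{k}\neq\boldsymbol{0}$ we have $\widehat{\Uc^0}(\boldsymbol{k})\bcdot\overline{\widehat{\boldsymbol{v}^0}(\boldsymbol{k})}=\widehat{\boldsymbol{g}^0}(\boldsymbol{k})\bcdot\overline{\widehat{\boldsymbol{v}^0}(\boldsymbol{k})}$. This holds because $\PT$ is symmetric and $\widehat{\boldsymbol{v}^0}\perp\boldsymbol{k}$ by solenoidality. The zero mode contributes nothing, since $\langle\boldsymbol{g}^0\rangle=\boldsymbol{0}$. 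Hence
\[
\int_\Omega\Uc^0\bcdot\boldsymbol{v}^0=\int_\Omega\boldsymbol{g}^0\bcdot\boldsymbol{v}^0
=\int_\Omega v^0_jv^0_i\partial_jv^0_i=\tfrac12\int_\Omega\boldsymbol{v}^0\bcdot\bnabla|\boldsymbol{v}^0|^2
=-\tfrac12\int_\Omega(\bnabla\bcdot\boldsymbol{v}^0)|\boldsymbol{v}^0|^2=0,
\]
integrating by parts with periodicity. Multiplying by $e^{-3\lambda\kappa t}$ gives \eqref{eq:work}. The energy law \eqref{eq:Edecay} is then immediate from $|\boldsymbol{v}|^2=|\boldsymbol{v}^0|^2e^{-2\lambda\kappa t}$. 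One can also obtain it from the energy identity, $\mathrm{d}E/\mathrm{d}t=-\kappa\langle|\bnabla\boldsymbol{v}|^2\rangle=-\lambda\kappa\langle|\boldsymbol{v}|^2\rangle$, which confirms the interpretation that the decay is purely viscous.

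\emph{Main obstacle.} The only delicate point is the bookkeeping of the zero mode and the self-adjointness of $\PT$ in the work computation, together with making rigorous that the Fourier definition \eqref{eq:leray} really yields a gradient $\bnabla\phi$ of a smooth periodic potential. I would handle this by writing $\phi$ explicitly as the Fourier series above and checking its convergence from the rapid decay of the coefficients.
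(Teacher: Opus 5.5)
Your proof is correct and follows essentially the paper's route: the unsteady and viscous terms cancel by the eigenfunction property, $\boldsymbol{g}$ scales as $e^{-2\lambda\kappa t}$ because it is quadratic, and the zero-work identity comes from self-adjointness of $\PT$ on solenoidal $\boldsymbol{v}$ (which you realise mode by mode via Parseval) together with $\langle\boldsymbol{v}\bcdot(\boldsymbol{v}\bcdot\bnabla)\boldsymbol{v}\rangle=0$. The one difference is that you verify the unprojected system directly, writing the potential explicitly so that $\bnabla p^0/\rho=\Uc^0-\boldsymbol{g}^0$; this makes precise the step the paper compresses into ``with the pressure recovered from the Poisson equation''. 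The convergence issue you flag as the main obstacle does not in fact arise. The relation $(\lvert\boldsymbol{k}\rvert^2-\lambda)\widehat{\boldsymbol{v}^0}(\boldsymbol{k})=\boldsymbol{0}$ confines $\boldsymbol{v}^0$ to the finitely many lattice points with $\lvert\boldsymbol{k}\rvert^2=\lambda$, so every series involved is a finite trigonometric sum.
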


\begin{proof}
Since $\boldsymbol{v}^0$ is an eigenfunction, $\partial_t\boldsymbol{v}=-\lambda\kappa\boldsymbol{v}$
and $\kappa\Delta\boldsymbol{v}=-\lambda\kappa\boldsymbol{v}$, so the left-hand side of
\eqref{eq:proj} cancels the viscous term identically. The convective field is quadratic, so
$\PT\boldsymbol{g}(t)=\Uc^0e^{-2\lambda\kappa t}=\PT\boldsymbol{f}(t)$, and the remaining two terms
of \eqref{eq:proj} cancel. Hence \eqref{eq:proj} is satisfied, and with the pressure recovered from
\eqref{eq:ppe} so is \eqref{eq:ns}. The right-hand side of \eqref{eq:ppe} is likewise quadratic in
$\boldsymbol{v}$, so it carries the factor $e^{-2\lambda\kappa t}$; since the Laplacian acts only
in space, the same factor passes to the pressure, giving
$p=p^0e^{-2\lambda\kappa t}$ as stated.

Two separate points about divergences are worth keeping apart. Solenoidality of the velocity is
immediate: $\boldsymbol{v}(t)=\boldsymbol{v}^0e^{-\lambda\kappa t}$ is a time-dependent multiple of
$\boldsymbol{v}^0$, so $\bnabla\bcdot\boldsymbol{v}=0$ for all $t$ because
$\bnabla\bcdot\boldsymbol{v}^0=0$. That $\Uc^0$ is divergence-free is a different statement: it says
that the force chosen in \eqref{eq:soln} is purely transverse, so that
$\PT\boldsymbol{f}=\boldsymbol{f}$ and no part of it is silently absorbed into the pressure.

For \eqref{eq:work}, $\PT$ is self-adjoint and $\PT\boldsymbol{v}=\boldsymbol{v}$, so
$\langle\boldsymbol{f}\bcdot\boldsymbol{v}\rangle
=\langle\PT\boldsymbol{g}\bcdot\boldsymbol{v}\rangle
=\langle\boldsymbol{g}\bcdot\boldsymbol{v}\rangle$, and for solenoidal $\boldsymbol{v}$
\[
\langle\boldsymbol{v}\bcdot(\boldsymbol{v}\bcdot\bnabla)\boldsymbol{v}\rangle
=\tfrac12\langle\boldsymbol{v}\bcdot\bnabla\lvert\boldsymbol{v}\rvert^2\rangle
=-\tfrac12\langle(\bnabla\bcdot\boldsymbol{v})\lvert\boldsymbol{v}\rvert^2\rangle=0 .
\]
The integration by parts here produces no boundary term, because $\Omega$ is periodic and the
integrand $\tfrac12\lvert\boldsymbol{v}\rvert^2\boldsymbol{v}$ is periodic with it. The zero-work
property is therefore a consequence of just two things: periodicity, and incompressibility through
the self-adjointness of $\PT$ and the identity above.
The energy identity then reduces to
\begin{equation}
\label{eq:Ebal}
\frac{\mathrm{d}E}{\mathrm{d}t}=-2\kappa\lambda E ,
\end{equation}
which integrates to \eqref{eq:Edecay}.
\end{proof}

Note that $\boldsymbol{f}$ in \eqref{eq:soln} is the applied force itself, not its projection. It
is chosen to be purely transverse, $\PT\boldsymbol{f}=\boldsymbol{f}$, so that
$\boldsymbol{\mathcal{F}}=\boldsymbol{f}=\Uc$ and \eqref{eq:proj7} closes. Adding any gradient to
$\boldsymbol{f}$ would leave $\boldsymbol{\mathcal{F}}$, and hence the velocity, unchanged; it
would merely shift the pressure by that amount.

Three remarks are worth emphasising. First, the theorem imposes no restriction on the Reynolds
number, since no expansion is made and no convergence is required. Second, unlike the
classification of \citet{tha2026}, it places no condition on the phases. Third, the forcing is not
freely prescribed: it is precisely the component of the convective field that cannot be absorbed
into the pressure, so the obstruction to the unforced problem becomes the mechanism that solves the
forced one.

Applied to the family \eqref{eq:family}, for which $\lambda=2\alpha^2$ by Lemma~\ref{lem:eig},
\begin{align}
\label{eq:solnfam}
\boldsymbol{v}(\boldsymbol{x},t)&=\boldsymbol{v}^0(\boldsymbol{x},\boldsymbol{\xi})\,
e^{-2\alpha^2\kappa t},
\nonumber\\
p(\boldsymbol{x},t)&=p^0(\boldsymbol{x},\boldsymbol{\xi})\,
e^{-4\alpha^2\kappa t},
\\
\boldsymbol{f}(\boldsymbol{x},t)&=\Uc^0(\boldsymbol{x},\boldsymbol{\xi})\,
e^{-4\alpha^2\kappa t},
\nonumber
\end{align}
in every dimension $n\ge3$ and for every phase vector. The velocity therefore relaxes on the
timescale $(2\alpha^2\kappa)^{-1}$, while the pressure and the constructed body force, both
inherited from quadratic convective quantities, decay at twice that exponential rate. The body
force is not intrinsically quadratic in the velocity; it is so here because of the way it is
constructed.

\section{The three-dimensional case in closed form}
\label{sec:3D}

We now write out the case $n=3$ of \eqref{eq:family} in full, retaining the wavenumber $\alpha$ and
the reference velocity $v_r$, so that the dependence of each quantity on them remains visible.

The phases may be set to zero without any loss of generality. In \eqref{eq:family} each phase
$\xi_{\langle m\rangle}$ accompanies its own coordinate $x_{\langle m\rangle}$ and no other, so
that
\begin{equation}
\label{eq:shift}
\boldsymbol{v}^0(\boldsymbol{x},\boldsymbol{\xi})
=\boldsymbol{v}^0\!\left(\boldsymbol{x}+\boldsymbol{\xi}/\alpha,\boldsymbol{0}\right),
\qquad
\bigl(\boldsymbol{\xi}/\alpha\bigr)_m=\xi_m/\alpha .
\end{equation}
A phase vector is therefore nothing but a rigid translation of the cell, applied independently
along each coordinate axis. Since the Navier--Stokes equations are translation invariant and the
Leray projection commutes with translations, every quantity computed below---the convective field,
its divergence, the pressure and the body force---is the corresponding $\boldsymbol{\xi}=\boldsymbol{0}$
expression evaluated at the shifted argument. Solutions thus exist for every phase vector, as
Theorem~\ref{thm:main} asserts, and the general expressions are recovered from those below by
replacing $\alpha x_{\langle m\rangle}$ with $\alpha x_{\langle m\rangle}+\xi_{\langle m\rangle}$
throughout. This is in marked contrast to the unforced classification of \citet{tha2026}, where the
phases enter the construction at positions independent of the coordinate they multiply and are
consequently not removable in this way.

With $\boldsymbol{\xi}=\boldsymbol{0}$, then, the field is
\begin{align}
\label{eq:v03}
v_1^0 &= v_r\bigl(\sin\alpha x_1\sin\alpha x_3 + \cos\alpha x_1\cos\alpha x_2\bigr), \nonumber\\
v_2^0 &= v_r\bigl(\sin\alpha x_2\sin\alpha x_1 + \cos\alpha x_2\cos\alpha x_3\bigr), \\
v_3^0 &= v_r\bigl(\sin\alpha x_3\sin\alpha x_2 + \cos\alpha x_3\cos\alpha x_1\bigr), \nonumber
\end{align}
with convective field $\boldsymbol{g}^0=(\boldsymbol{v}^0\bcdot\bnabla)\boldsymbol{v}^0$,
\begin{align}
\label{eq:g03}
g_1^0 &= \alpha v_r^2\Bigl[-\bigl(\sin\alpha x_1\sin\alpha x_2+\cos\alpha x_2\cos\alpha x_3\bigr)
        \sin\alpha x_2\cos\alpha x_1
        \nonumber\\
      &\quad -\bigl(\sin\alpha x_1\sin\alpha x_3+\cos\alpha x_1\cos\alpha x_2\bigr)
              \bigl(\sin\alpha x_1\cos\alpha x_2-\sin\alpha x_3\cos\alpha x_1\bigr)
        \nonumber\\
      &\quad +\bigl(\sin\alpha x_2\sin\alpha x_3+\cos\alpha x_1\cos\alpha x_3\bigr)
        \sin\alpha x_1\cos\alpha x_3\Bigr],
\end{align}
the remaining components following by cyclic permutation. Its divergence is
\begin{equation}
\label{eq:divg3}
\bnabla\bcdot\boldsymbol{g}^0
= -\alpha^2v_r^2\bigl[\sin\alpha x_1\sin 2\alpha x_2\cos\alpha x_3
       + \sin 2\alpha x_1\sin\alpha x_3\cos\alpha x_2
       + \sin\alpha x_2\sin 2\alpha x_3\cos\alpha x_1\bigr].
\end{equation}
Every harmonic here carries $\lvert\boldsymbol{k}\rvert^2=6\alpha^2$, so the Poisson equation
inverts by a single division, the factor $\alpha^2$ cancels, and the pressure is
\begin{equation}
\label{eq:p3}
p^0 = -\frac{\rho v_r^2}{6}\bigl[
\sin\alpha x_1\sin 2\alpha x_2\cos\alpha x_3 + \sin 2\alpha x_1\sin\alpha x_3\cos\alpha x_2
+ \sin\alpha x_2\sin 2\alpha x_3\cos\alpha x_1\bigr],
\end{equation}
whose amplitude is independent of $\alpha$, as dimensional consistency requires: a pressure divided
by density has the dimensions of $v_r^2$. The field itself of course still depends on $\alpha$,
through the arguments of the trigonometric factors.

The body force is $\boldsymbol{f}=\Uc^0e^{-4\alpha^2\kappa t}$ with
$\Uc^0=\boldsymbol{g}^0+\bnabla p^0/\rho$, which evaluates to
\begingroup\small
\begin{align}
\label{eq:U3}
\mathcal{U}_1^0 ={}&\tfrac{\alpha v_r^2}{6}\bigl[
 \sin\alpha(-2x_1+x_2+x_3)+\sin\alpha(-x_1+x_2+2x_3)-\sin\alpha(-x_1+2x_2+x_3)\nonumber\\
&+\sin\alpha(x_1-2x_2+x_3)+\sin\alpha(x_1-x_2+2x_3)+\sin\alpha(x_1+x_2-2x_3)\nonumber\\
&-\sin\alpha(x_1+x_2+2x_3)-\sin\alpha(x_1+2x_2-x_3)-\sin\alpha(x_1+2x_2+x_3)\nonumber\\
&+\sin\alpha(2x_1-x_2+x_3)-\sin\alpha(2x_1+x_2-x_3)+\sin\alpha(2x_1+x_2+x_3)\bigr],\nonumber\\[3pt]
\mathcal{U}_2^0 ={}&\tfrac{\alpha v_r^2}{6}\bigl[
 \sin\alpha(-2x_1+x_2+x_3)-\sin\alpha(-x_1+x_2+2x_3)-\sin\alpha(-x_1+2x_2+x_3)\nonumber\\
&+\sin\alpha(x_1-2x_2+x_3)-\sin\alpha(x_1-x_2+2x_3)+\sin\alpha(x_1+x_2-2x_3)\nonumber\\
&-\sin\alpha(x_1+x_2+2x_3)+\sin\alpha(x_1+2x_2-x_3)+\sin\alpha(x_1+2x_2+x_3)\nonumber\\
&+\sin\alpha(2x_1-x_2+x_3)+\sin\alpha(2x_1+x_2-x_3)-\sin\alpha(2x_1+x_2+x_3)\bigr],\\[3pt]
\mathcal{U}_3^0 ={}&\tfrac{\alpha v_r^2}{6}\bigl[
 \sin\alpha(-2x_1+x_2+x_3)+\sin\alpha(-x_1+x_2+2x_3)+\sin\alpha(-x_1+2x_2+x_3)\nonumber\\
&+\sin\alpha(x_1-2x_2+x_3)-\sin\alpha(x_1-x_2+2x_3)+\sin\alpha(x_1+x_2-2x_3)\nonumber\\
&+\sin\alpha(x_1+x_2+2x_3)+\sin\alpha(x_1+2x_2-x_3)-\sin\alpha(x_1+2x_2+x_3)\nonumber\\
&-\sin\alpha(2x_1-x_2+x_3)-\sin\alpha(2x_1+x_2-x_3)-\sin\alpha(2x_1+x_2+x_3)\bigr].\nonumber
\end{align}
\endgroup
Every wavevector appearing in \eqref{eq:U3} is $\alpha$ times a permutation of
$(\pm1,\pm1,\pm2)$, so that $\lvert\boldsymbol{k}\rvert^2=6\alpha^2$ throughout, and $\bnabla\bcdot\Uc^0=0$ identically.

Collecting \eqref{eq:v03}, \eqref{eq:p3} and \eqref{eq:U3} and attaching the time factors of
\eqref{eq:solnfam}, the three-dimensional solution reads, in full,
\begin{equation}
\label{eq:soln3}
\begin{aligned}
v_1(\boldsymbol{x},t) &= v_r\bigl(\sin\alpha x_1\sin\alpha x_3
   + \cos\alpha x_1\cos\alpha x_2\bigr)\,e^{-2\alpha^2\kappa t},
\\
v_2(\boldsymbol{x},t) &= v_r\bigl(\sin\alpha x_2\sin\alpha x_1
   + \cos\alpha x_2\cos\alpha x_3\bigr)\,e^{-2\alpha^2\kappa t},
\\
v_3(\boldsymbol{x},t) &= v_r\bigl(\sin\alpha x_3\sin\alpha x_2
   + \cos\alpha x_3\cos\alpha x_1\bigr)\,e^{-2\alpha^2\kappa t},
\\[4pt]
p(\boldsymbol{x},t) &= -\frac{\rho v_r^2}{6}\bigl[
\sin\alpha x_1\sin 2\alpha x_2\cos\alpha x_3
+ \sin 2\alpha x_1\sin\alpha x_3\cos\alpha x_2
\\
&\qquad\qquad\quad
+ \sin\alpha x_2\sin 2\alpha x_3\cos\alpha x_1\bigr]\,e^{-4\alpha^2\kappa t},
\\[4pt]
\boldsymbol{f}(\boldsymbol{x},t) &= \Uc^0(\boldsymbol{x})\,e^{-4\alpha^2\kappa t},
\qquad\text{with }\Uc^0\text{ given by \eqref{eq:U3}},
\end{aligned}
\end{equation}
the kinetic energy decaying as $E(t)=E(0)e^{-4\alpha^2\kappa t}$ by \eqref{eq:Edecay}. Velocity,
pressure and body force are thus all explicit, in space and in time, for arbitrary $\alpha$, $v_r$,
$\rho$ and $\kappa$; the expressions at a general phase vector follow from \eqref{eq:shift}.

\section{Closed forms in arbitrary dimension}
\label{sec:dims}

Everything in Section~\ref{sec:3D} can be carried through for general $n$, and the results are not
merely verifiable case by case: they can be proved in closed form, for every $n\ge3$ and every
phase vector. The mechanism is that the cyclic construction has a finite interaction range. Each
component of $\boldsymbol{v}^0$ involves only three consecutive coordinates, so each component of
the convective field involves only five, whatever the dimension; and the three terms it reduces to
carry their doubled frequency in three different coordinates, which keeps them orthogonal.

Throughout this section $\lVert\cdot\rVert$ denotes the cell-averaged $L^2$ norm,
$\lVert\boldsymbol{w}\rVert=\langle\lvert\boldsymbol{w}\rvert^2\rangle^{1/2}$, which by Parseval's
identity equals the $\ell^2$ norm of the Fourier coefficients. That the quantities below do not
depend on the phases is immediate from \eqref{eq:shift}: a phase vector translates the field
rigidly, and a translation leaves an integral over the periodic cell unchanged. We nonetheless
carry $\boldsymbol{\xi}$ through the statements, writing
\begin{equation}
\label{eq:theta}
\theta_m \;:=\; \alpha x_{\langle m\rangle}+\xi_{\langle m\rangle},
\end{equation}
so that the family \eqref{eq:family} reads
$v_i^0=v_r(\sin\theta_i\sin\theta_{i-1}+\cos\theta_i\cos\theta_{i+1})$.

\begin{lemma}[The convective field in closed form]
\label{lem:gclosed}
For every $n\ge3$, every $\boldsymbol{\xi}$ and every $i$,
\begin{equation}
\label{eq:gclosed}
g_i^0 = \alpha v_r^2\Bigl[
\tfrac12\sin\theta_{i-2}\,\sin2\theta_{i-1}\,\sin\theta_i
\;+\;\sin\theta_{i-1}\,\cos2\theta_i\,\cos\theta_{i+1}
\;-\;\tfrac12\sin2\theta_{i+1}\,\cos\theta_i\,\cos\theta_{i+2}\Bigr].
\end{equation}
\end{lemma}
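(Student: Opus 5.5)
The plan is to compute $g_i^0=\sum_j v_j^0\,\partial v_i^0/\partial x_j$ directly in the variables $\theta_m$ of \eqref{eq:theta}, using $\partial\theta_m/\partial x_j=\alpha$ when $\langle m\rangle=j$ and $0$ otherwise. By \eqref{eq:family}, $v_i^0$ depends only on $\theta_{i-1}$, $\theta_i$ and $\theta_{i+1}$. For $n\ge3$ these are three distinct coordinates, so exactly three terms of the sum survive, namely $j=\langle i-1\rangle$, $\langle i\rangle$ and $\langle i+1\rangle$. This distinctness is the only place where $n\ge3$ enters. Everything afterwards is manipulation of symbols $\theta_m$ with cyclic indices. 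At $n=3$ some labels coincide, e.g.\ $\theta_{i-2}=\theta_{i+1}$, but this does no harm, because no step uses independence of $\theta_{i\pm2}$ from the others.

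First I would record the three derivatives:
\[
\frac{1}{\alpha v_r}\frac{\partial v_i^0}{\partial x_{\langle i-1\rangle}}=\sin\theta_i\cos\theta_{i-1},\qquad
\frac{1}{\alpha v_r}\frac{\partial v_i^0}{\partial x_{\langle i\rangle}}=\cos\theta_i\sin\theta_{i-1}-\sin\theta_i\cos\theta_{i+1},\qquad
\frac{1}{\alpha v_r}\frac{\partial v_i^0}{\partial x_{\langle i+1\rangle}}=-\cos\theta_i\sin\theta_{i+1}.
\]
Next I would write the multipliers:
\[
v_{i-1}^0/v_r=\sin\theta_{i-1}\sin\theta_{i-2}+\cos\theta_{i-1}\cos\theta_i,\qquad
v_{i+1}^0/v_r=\sin\theta_{i+1}\sin\theta_i+\cos\theta_{i+1}\cos\theta_{i+2}.
\]
Here $v_i^0$ itself is taken from \eqref{eq:family}.

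Multiplying out gives six products, plus four from the $v_i^0\,\partial_i v_i^0$ term. I would then sort them into two groups.

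The first group matches the target directly. The term $\sin\theta_{i-2}\sin\theta_{i-1}\cos\theta_{i-1}\sin\theta_i$ from $j=i-1$ becomes the first term of \eqref{eq:gclosed} by $\sin2\theta=2\sin\theta\cos\theta$. The term $-\cos\theta_{i+1}\cos\theta_{i+2}\cos\theta_i\sin\theta_{i+1}$ from $j=i+1$ becomes the third term in the same way. The cross terms of $v_i^0\,\partial_iv_i^0$ combine to $\sin\theta_{i-1}\cos\theta_{i+1}(\cos^2\theta_i-\sin^2\theta_i)$, which is the middle term by $\cos2\theta=\cos^2\theta-\sin^2\theta$.

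The second group consists of all remaining products. Each carries the common factor $\sin\theta_i\cos\theta_i$, and they sum to
\[
\sin\theta_i\cos\theta_i\bigl[(\cos^2\theta_{i-1}+\sin^2\theta_{i-1})-(\sin^2\theta_{i+1}+\cos^2\theta_{i+1})\bigr],
\]
which vanishes by the Pythagorean identity.

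I expect no genuine obstacle. The only step needing care is this bookkeeping: checking that the four leftover products pair up with the right signs into $1-1$. I would do that explicitly rather than appeal to symmetry. Finally, I would note that $\boldsymbol{\xi}$ plays no role beyond entering through $\theta_m$, consistent with \eqref{eq:shift}. The $n=3$ case can then be checked against \eqref{eq:g03}.
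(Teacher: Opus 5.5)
Your proposal is correct and follows the paper's proof essentially step for step. Only $j=\langle i-1\rangle,\langle i\rangle,\langle i+1\rangle$ contribute, the $\sin\theta_i\cos\theta_i$ terms cancel by the Pythagorean identity, and the remaining products combine into the three terms of \eqref{eq:gclosed} via the double-angle formulas. There is one trivial miscount: the $j=i\pm1$ terms give four products, not six, so there are eight in all, which matches your count of four leftover products.
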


\begin{proof}
By \eqref{eq:family}, $v_i^0$ depends only on $x_{\langle i-1\rangle}$, $x_{\langle i\rangle}$ and
$x_{\langle i+1\rangle}$, so only $j=i-1,i,i+1$ contribute to
$g_i^0=\sum_j v_j^0\,\partial v_i^0/\partial x_j$, with
\begin{align*}
\frac{\partial v_i^0}{\partial x_{\langle i-1\rangle}}&=\alpha v_r\sin\theta_i\cos\theta_{i-1},
&
\frac{\partial v_i^0}{\partial x_{\langle i+1\rangle}}&=-\alpha v_r\cos\theta_i\sin\theta_{i+1},
\\
\frac{\partial v_i^0}{\partial x_{\langle i\rangle}}
 &=\alpha v_r\bigl(\cos\theta_i\sin\theta_{i-1}-\sin\theta_i\cos\theta_{i+1}\bigr). &&
\end{align*}
Multiplying by $v_{i-1}^0$, $v_i^0$ and $v_{i+1}^0$ respectively and expanding, the terms carrying
the factor $\sin\theta_i\cos\theta_i$ are
\[
\sin\theta_i\cos\theta_i\bigl(\cos^2\theta_{i-1}+\sin^2\theta_{i-1}\bigr)
-\sin\theta_i\cos\theta_i\bigl(\cos^2\theta_{i+1}+\sin^2\theta_{i+1}\bigr),
\]
which vanishes identically. Of what remains, the two terms in $\sin\theta_{i-1}\cos\theta_{i+1}$
combine as $\cos^2\theta_i-\sin^2\theta_i=\cos2\theta_i$, and the double-angle identity applied to
$\sin\theta_{i-1}\cos\theta_{i-1}$ and to $\sin\theta_{i+1}\cos\theta_{i+1}$ gives \eqref{eq:gclosed}.
\end{proof}

Three features of \eqref{eq:gclosed} do the work that follows. First, it involves only the five
coordinates $x_{\langle i-2\rangle},\dots,x_{\langle i+2\rangle}$, whatever $n$ may be. Second, each
of its three terms is a product of three factors in three \emph{distinct} coordinates---for the
first term these are $\langle i-2\rangle,\langle i-1\rangle,\langle i\rangle$, and consecutive
indices are distinct modulo $n$ for every $n\ge3$---one factor carrying frequency $2\alpha$ and two
carrying $\alpha$. Expanding each product gives eight harmonics, one for each choice of sign, and
all eight share the same vector of coordinate-wise magnitudes
$\bigl(\lvert k_1\rvert,\dots,\lvert k_n\rvert\bigr)$. In particular
\begin{equation}
\label{eq:k6}
\lvert\boldsymbol{k}\rvert^2=(2\alpha)^2+\alpha^2+\alpha^2=6\alpha^2
\end{equation}
for every mode present in \eqref{eq:gclosed}.

Third, the three terms have disjoint Fourier supports. Label them $A$, $B$ and $C$ in the order
written in \eqref{eq:gclosed}. Their magnitude vectors are
\begin{equation}
\label{eq:supports}
\begin{array}{lccccc}
 & \lvert k_{\langle i-2\rangle}\rvert & \lvert k_{\langle i-1\rangle}\rvert
 & \lvert k_{\langle i\rangle}\rvert & \lvert k_{\langle i+1\rangle}\rvert
 & \lvert k_{\langle i+2\rangle}\rvert \\[2pt]
A: & \alpha & 2\alpha & \alpha & & \\
B: &  & \alpha & 2\alpha & \alpha & \\
C: &  &  & \alpha & 2\alpha & \alpha .
\end{array}
\end{equation}
Compare $A$ with $B$ at the single coordinate $x_{\langle i-1\rangle}$: every harmonic of $A$ has
$\lvert k_{\langle i-1\rangle}\rvert=2\alpha$ there, and every harmonic of $B$ has
$\lvert k_{\langle i-1\rangle}\rvert=\alpha$. No wavevector can satisfy both, whatever the signs, so
the supports are disjoint. The same coordinate separates $A$ from $C$: at $n=3$ the columns
$\langle i+2\rangle$ and $\langle i-1\rangle$ coincide, so $C$ has
$\lvert k_{\langle i-1\rangle}\rvert=\alpha$ there, while for $n\ge4$ it has $0$; neither is
$2\alpha$. And $x_{\langle i\rangle}$ separates $B$ from $C$, with magnitudes $2\alpha$ and
$\alpha$. Each comparison is between magnitudes at one named coordinate, so it is unaffected both by
the choice of signs and by any coincidence among the cyclic indices, and it therefore holds for
every $n\ge3$. The three terms are mutually orthogonal in $L^2(\Omega)$.

\begin{corollary}[Divergence and pressure]
\label{cor:pn}
For every $n\ge3$ and every $\boldsymbol{\xi}$,
\begin{equation}
\label{eq:divgn}
\bnabla\bcdot\boldsymbol{g}^0
=-\alpha^2v_r^2\sum_{i=1}^{n}\sin\theta_i\,\sin2\theta_{i+1}\,\cos\theta_{i+2},
\end{equation}
and the pressure of Theorem~\ref{thm:main} is
\begin{equation}
\label{eq:pn}
p^0 = -\frac{\rho v_r^2}{6}\sum_{i=1}^{n}
\sin\theta_i\,\sin2\theta_{i+1}\,\cos\theta_{i+2}.
\end{equation}
\end{corollary}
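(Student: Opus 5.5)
The plan is to differentiate the closed form \eqref{eq:gclosed} of Lemma~\ref{lem:gclosed} component by component, sum over $i$, and then invert the Poisson equation of Theorem~\ref{thm:main} using the single-shell property \eqref{eq:k6}. For the divergence we need only $\partial g_i^0/\partial x_{\langle i\rangle}$. By \eqref{eq:theta}, differentiating in $x_{\langle i\rangle}$ acts on $\theta_i$ alone with a factor $\alpha$. In each of the three terms $A,B,C$, the index $i$ appears in exactly one factor. This holds because $\langle i-2\rangle,\langle i-1\rangle,\langle i+1\rangle,\langle i+2\rangle$ all differ from $\langle i\rangle$ for every $n\ge3$, even though at $n=3$ we have $\langle i-2\rangle=\langle i+1\rangle$. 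Differentiating the factor in $\theta_i$ gives
\[
\frac{1}{\alpha^2v_r^2}\frac{\partial g_i^0}{\partial x_{\langle i\rangle}}
=\tfrac12\sin\theta_{i-2}\sin2\theta_{i-1}\cos\theta_i
-2\sin\theta_{i-1}\sin2\theta_i\cos\theta_{i+1}
+\tfrac12\sin2\theta_{i+1}\sin\theta_i\cos\theta_{i+2}.
\]

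Next we sum over $i=1,\dots,n$ and reindex cyclically, as in the proof of Lemma~\ref{lem:sol}. Because $\langle\cdot\rangle$ makes the sum one over $\mathbb{Z}/n$, shifts are bijections. Shifting the first sum by $i\mapsto i+2$ and the second by $i\mapsto i+1$ puts every term in the common form $\sin\theta_i\sin2\theta_{i+1}\cos\theta_{i+2}$. The coefficients are then $\tfrac12-2+\tfrac12=-1$, which gives \eqref{eq:divgn}. As a consistency check, at $n=3$ this should reproduce \eqref{eq:divg3} term by term.

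For the pressure, each summand $\sin\theta_i\sin2\theta_{i+1}\cos\theta_{i+2}$ is a product of three factors in three distinct coordinates, with frequencies $\alpha,2\alpha,\alpha$. Every harmonic in it therefore has $\lvert\boldsymbol{k}\rvert^2=6\alpha^2$, exactly as in \eqref{eq:k6}, and has no zero mode. So $\Delta$ acts on the right-hand side of \eqref{eq:divgn} as multiplication by $-6\alpha^2$. The equation $\Delta p^0=-\rho\bnabla\bcdot\boldsymbol{g}^0=\rho\alpha^2v_r^2\sum_i\sin\theta_i\sin2\theta_{i+1}\cos\theta_{i+2}$ is then solved by \eqref{eq:pn}, which has zero mean. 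Uniqueness of the zero-mean periodic solution identifies it with the $p^0$ of Theorem~\ref{thm:main}.

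The main obstacle is bookkeeping rather than analysis: confirming that no two of the cyclic indices collide in a way that would let $\partial/\partial x_{\langle i\rangle}$ hit two factors. This is the place where $n\ge3$ is used; at $n=2$, for instance, $\langle i+2\rangle=\langle i\rangle$. A second point is that the reindexing must be done modulo $n$, so that no boundary terms appear.
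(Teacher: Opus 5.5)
Your proposal is correct and follows the paper's proof essentially step for step. Differentiating \eqref{eq:gclosed} in $x_{\langle i\rangle}$ gives $\alpha^2v_r^2\bigl(\tfrac12S_{i-2}-2S_{i-1}+\tfrac12S_i\bigr)$ with $S_i=\sin\theta_i\sin2\theta_{i+1}\cos\theta_{i+2}$; cyclic relabelling then yields the coefficient $-1$; and the single-shell property \eqref{eq:k6} with the zero-mean normalisation identifies $p^0$. Your explicit check that, for $n\ge3$, $\partial/\partial x_{\langle i\rangle}$ acts on only one factor in each term is correct, and it is a detail the paper leaves implicit.
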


\begin{proof}
Differentiating \eqref{eq:gclosed} with respect to $x_{\langle i\rangle}$ and writing
$S_i:=\sin\theta_i\sin2\theta_{i+1}\cos\theta_{i+2}$, the three terms contribute
$\tfrac12\alpha^2v_r^2S_{i-2}$, $-2\alpha^2v_r^2S_{i-1}$ and $\tfrac12\alpha^2v_r^2S_i$. Summing
over $i$ and relabelling each cyclic sum gives
$(\tfrac12-2+\tfrac12)\alpha^2v_r^2\sum_iS_i$, which is \eqref{eq:divgn}. Every harmonic of $S_i$
satisfies \eqref{eq:k6}, so $\Delta S_i=-6\alpha^2S_i$; hence \eqref{eq:pn} satisfies
$\Delta p^0=-\rho\,\bnabla\bcdot\boldsymbol{g}^0$, which is \eqref{eq:ppe} at $t=0$. It has zero
cell mean, so it is the solution fixed in Theorem~\ref{thm:main}.
\end{proof}

At $n=3$, \eqref{eq:pn} is the expression \eqref{eq:p3} already obtained. We can now also settle the
magnitudes.

\begin{proposition}[Norms]
\label{prop:norms}
For every $n\ge3$ and every phase vector,
\begin{equation}
\label{eq:scaling}
\lVert\boldsymbol{g}^0\rVert=\alpha v_r^2\,\frac{\sqrt{3n}}{4},
\qquad
\lVert\Uc^0\rVert=\alpha v_r^2\sqrt{\frac{n}{6}},
\qquad
\frac{\lVert\Uc^0\rVert}{\lVert\boldsymbol{g}^0\rVert}=\frac{2\sqrt2}{3}\approx0.9428 .
\end{equation}
In particular $\Uc^0\neq\boldsymbol{0}$ in every dimension and at every phase vector, so the forcing
of Theorem~\ref{thm:main} is never trivial.
\end{proposition}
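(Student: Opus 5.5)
The plan is to compute $\lVert\boldsymbol{g}^0\rVert$ directly from Lemma~\ref{lem:gclosed}, to compute $\lVert\bnabla p^0/\rho\rVert$ from Corollary~\ref{cor:pn}, and to obtain $\lVert\Uc^0\rVert$ by Pythagoras in the Helmholtz decomposition. Since $\Uc^0=\boldsymbol{g}^0+\bnabla p^0/\rho$ is solenoidal and $\bnabla p^0$ is a gradient, the two are $L^2$-orthogonal on the torus. Indeed, $\langle\Uc^0\bcdot\bnabla p^0\rangle=-\langle p^0\,\bnabla\bcdot\Uc^0\rangle=0$, or, mode by mode via \eqref{eq:leray}, the two parts are perpendicular to and parallel to $\boldsymbol{k}$ respectively. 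Writing $\boldsymbol{g}^0=\Uc^0-\bnabla p^0/\rho$, this gives
\[
\lVert\Uc^0\rVert^2=\lVert\boldsymbol{g}^0\rVert^2-\lVert\bnabla p^0/\rho\rVert^2 .
\]
By \eqref{eq:shift} we may set $\boldsymbol{\xi}=\boldsymbol{0}$ throughout, although nothing below actually uses this.

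For $\lVert\boldsymbol{g}^0\rVert$, the three terms $A,B,C$ of \eqref{eq:gclosed} are mutually orthogonal by the support argument following Lemma~\ref{lem:gclosed}. Each term is a product of trigonometric factors in distinct coordinates, so its mean square factorises. Each factor $\sin^2$ or $\cos^2$ of frequency $\alpha$ or $2\alpha$ averages to $\tfrac12$. Hence
\[
\langle A^2\rangle=\langle C^2\rangle=\tfrac14\cdot\tfrac18=\tfrac1{32},
\qquad
\langle B^2\rangle=\tfrac18 ,
\]
in units of $\alpha^2v_r^4$. Each component therefore contributes $\tfrac{6}{32}=\tfrac{3}{16}$. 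Summing over the $n$ components gives $\lVert\boldsymbol{g}^0\rVert^2=\tfrac{3n}{16}\alpha^2v_r^4$, which is the first claim in \eqref{eq:scaling}.

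For the pressure, I would first show that the summands $S_i=\sin\theta_i\sin2\theta_{i+1}\cos\theta_{i+2}$ in \eqref{eq:pn} are mutually orthogonal. This is the same single-coordinate argument as before. Every harmonic of $S_i$ has $\lvert k_{\langle i+1\rangle}\rvert=2\alpha$. For $j\ne i$, the only coordinate at which $S_j$ carries $2\alpha$ is $\langle j+1\rangle\neq\langle i+1\rangle$, so $S_j$ has magnitude $\alpha$ or $0$ at $\langle i+1\rangle$. This comparison is valid for all $n\ge3$, including $n=3$, where index coincidences occur. Consequently $\langle S_i^2\rangle=\tfrac18$ and
\[
\langle (p^0/\rho)^2\rangle=\frac{v_r^4}{36}\cdot\frac n8 .
\]
Integrating by parts and using $\Delta p^0=-6\alpha^2p^0$ (every mode satisfies \eqref{eq:k6}) gives
\[
\lVert\bnabla p^0/\rho\rVert^2=6\alpha^2\langle (p^0/\rho)^2\rangle=\frac{n}{48}\alpha^2v_r^4 .
\]

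Subtracting, $\lVert\Uc^0\rVert^2=\bigl(\tfrac{9n}{48}-\tfrac{n}{48}\bigr)\alpha^2v_r^4=\tfrac n6\alpha^2v_r^4$. The ratio is then $\sqrt{(1/6)/(3/16)}=\sqrt{8/9}=2\sqrt2/3$. Since $\lVert\Uc^0\rVert>0$, we have $\Uc^0\ne\boldsymbol{0}$, which gives non-triviality at every phase vector. The only delicate step is the orthogonality bookkeeping at small $n$, where cyclic indices coincide: for $g^0$ it was already settled after Lemma~\ref{lem:gclosed}, and for the $S_i$ it needs the explicit single-coordinate check above. The Helmholtz orthogonality and the factorised averages are routine.
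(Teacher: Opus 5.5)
Your proposal is correct and follows the paper's proof essentially step for step. It uses mutual orthogonality of the three terms of \eqref{eq:gclosed} to obtain $\tfrac{3}{16}\alpha^2v_r^4$ per component, the single-coordinate check at $x_{\langle i+1\rangle}$ for orthogonality of the $S_i$, the relation $\lVert\bnabla p^0/\rho\rVert^2=6\alpha^2\lVert p^0/\rho\rVert^2$, and Pythagoras in the $L^2$-orthogonal Helmholtz decomposition. The differences are only cosmetic: you obtain the gradient norm by integration by parts rather than Parseval, and the equal per-component contribution by direct computation rather than cyclic invariance.
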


\begin{proof}
The three terms of \eqref{eq:gclosed} are mutually orthogonal, as noted above, and within each term
the three coordinates involved are distinct, so each mean square factorises. Using
$\langle\sin^2\rangle=\langle\cos^2\rangle=\tfrac12$ on each factor,
\[
\bigl\langle (g_i^0)^2\bigr\rangle
=\alpha^2v_r^4\Bigl(\tfrac14\cdot\tfrac18+\tfrac18+\tfrac14\cdot\tfrac18\Bigr)
=\tfrac{3}{16}\,\alpha^2v_r^4 .
\]
The construction is invariant under the cyclic shift of the coordinates, so every component
contributes equally and
$\lVert\boldsymbol{g}^0\rVert^2=\tfrac{3}{16}n\,\alpha^2v_r^4$, which is the first formula.

For the second, the summands $S_i$ of \eqref{eq:pn} are mutually orthogonal, by the argument used
for \eqref{eq:supports}. Each $S_i$ is a product of three factors in the three distinct coordinates
$x_{\langle i\rangle}$, $x_{\langle i+1\rangle}$, $x_{\langle i+2\rangle}$, the doubled frequency
being carried by the middle one, so every harmonic of $S_i$ has
$\lvert k_{\langle i+1\rangle}\rvert=2\alpha$. Let $j\neq i$. Since $m\mapsto\langle m+1\rangle$ is a
bijection, $\langle j+1\rangle\neq\langle i+1\rangle$, and at the coordinate
$x_{\langle i+1\rangle}$ every harmonic of $S_j$ has
$\lvert k_{\langle i+1\rangle}\rvert$ equal to $\alpha$, if $\langle i+1\rangle$ is one of
$\langle j\rangle$ or $\langle j+2\rangle$, and to $0$ otherwise. Neither equals $2\alpha$, so the
supports of $S_i$ and $S_j$ are disjoint; as before this is independent of signs and of cyclic
coincidences, and holds at $n=3$ as at any other $n$. Each $S_i$ has mean square $\tfrac18$, whence
$\lVert p^0/\rho\rVert^2=\tfrac{1}{36}v_r^4\cdot\tfrac{n}{8}=nv_r^4/288$. Every harmonic satisfies
\eqref{eq:k6}, so
$\lVert\bnabla p^0/\rho\rVert^2=6\alpha^2\lVert p^0/\rho\rVert^2=n\alpha^2v_r^4/48$. The Helmholtz
decomposition is orthogonal in $L^2(\Omega)$ and
$\boldsymbol{g}^0=\Uc^0-\bnabla p^0/\rho$, so
\[
\lVert\Uc^0\rVert^2=\lVert\boldsymbol{g}^0\rVert^2-\lVert\bnabla p^0/\rho\rVert^2
=\alpha^2v_r^4\Bigl(\frac{3n}{16}-\frac{n}{48}\Bigr)=\alpha^2v_r^4\,\frac{n}{6},
\]
and the ratio follows.
\end{proof}

Proposition~\ref{prop:norms} holds for every $n\ge3$, so no table is needed to establish it.
Table~\ref{tab:dims} nonetheless records the agreement of \eqref{eq:scaling} with independent
computation in exact rational arithmetic, in units with $\alpha=v_r=1$, and is carried to $n=12$
simply to show that nothing distinguishes the dimensions beyond those reached in
\citet{tha2026}.

\begin{table}[ht]
\centering
\begin{tabular}{crrr}
\toprule
$n$ & $\lVert\boldsymbol{g}^0\rVert$ & $\lVert\Uc^0\rVert$
  & $\lVert\Uc^0\rVert/\lVert\boldsymbol{g}^0\rVert$ \\
\midrule
3 & $0.7500$ & $0.7071$ & $0.9428$ \\
4 & $0.8660$ & $0.8165$ & $0.9428$ \\
5 & $0.9682$ & $0.9129$ & $0.9428$ \\
6 & $1.0607$ & $1.0000$ & $0.9428$ \\
7 & $1.1456$ & $1.0801$ & $0.9428$ \\
8 & $1.2247$ & $1.1547$ & $0.9428$ \\
9 & $1.2990$ & $1.2247$ & $0.9428$ \\
10 & $1.3693$ & $1.2910$ & $0.9428$ \\
11 & $1.4361$ & $1.3540$ & $0.9428$ \\
12 & $1.5000$ & $1.4142$ & $0.9428$ \\
\bottomrule
\end{tabular}
\caption{The convective field of \eqref{eq:family} and its transverse part, in units with
$\alpha=v_r=1$, confirming Proposition~\ref{prop:norms}, which itself holds for every $n\ge3$. The
magnitudes grow as $\sqrt{n}$; the
ratio of transverse to total $L^2$ magnitude is independent of dimension and phase alike. In
squared norm the corresponding share is
$\lVert\Uc^0\rVert^2/\lVert\boldsymbol{g}^0\rVert^2=8/9$.}
\label{tab:dims}
\end{table}

The solution of Section~\ref{sec:main} is therefore explicit in space and time in every dimension.
Combining \eqref{eq:family}, \eqref{eq:pn} and \eqref{eq:solnfam}, and writing $i=1,\dots,n$ for the
velocity components,
\begin{equation}
\label{eq:solnn}
\begin{aligned}
v_i(\boldsymbol{x},t) &= v_r\bigl[
\sin\theta_i\sin\theta_{i-1}+\cos\theta_i\cos\theta_{i+1}
\bigr] e^{-2\alpha^2\kappa t},
\\[3pt]
p(\boldsymbol{x},t) &= -\frac{\rho v_r^2}{6}\sum_{i=1}^{n}
\sin\theta_i\,\sin2\theta_{i+1}\,\cos\theta_{i+2}\; e^{-4\alpha^2\kappa t},
\\[3pt]
f_i(\boldsymbol{x},t) &=
\Bigl[g_i^0+\frac{1}{\rho}\frac{\partial p^0}{\partial x_i}\Bigr] e^{-4\alpha^2\kappa t},
\qquad
g_i^0 \text{ as in \eqref{eq:gclosed}},
\end{aligned}
\end{equation}
with $E(t)=E(0)e^{-4\alpha^2\kappa t}$ throughout. At $n=3$ these reduce to \eqref{eq:soln3}, in
which the body force has in addition been reduced to the explicit trigonometric sum \eqref{eq:U3}.

Two consequences follow. For fixed $\alpha$ and $v_r$, the forcing norm grows only as $\sqrt{n}$.
And since the transverse-to-total ratio is the same constant in every dimension, the transverse
part of the convective field stands in the same proportion to the whole throughout: the benchmark
is of uniform character across $n$, which is what makes comparison across $n$ meaningful. Both
statements compare dimensions at fixed $\alpha$ and $v_r$, hence at a fixed spatial scale
$L=2\pi/\alpha$.

This stands in sharp contrast to \citet{tha2026}, where the admissible phase set is empty for
$5\le n\le8$ and the dimension-dependence is central to the result. In the present setting, every
phase vector in every dimension admits a solution, and the only dimensional dependence is the
factor $\sqrt{n}$.

The exact construction also provides a controlled way to ask a different question. Because the
transverse nonlinear term can be identified explicitly and balanced without adding net kinetic
energy, it can be selectively retained rather than cancelled everywhere. This makes it possible to
examine whether nonlinear activity can produce strong local velocity gradients, and hence vorticity
concentration, while separating that effect from direct energy input by the applied force.

\section{The forced solution as a reference for vorticity concentration}
\label{sec:conc}

In Theorem~\ref{thm:main} the body force cancels the transverse part of
the nonlinear term everywhere, so that the velocity evolves by diffusion
alone and, by \eqref{eq:work}, the forcing does no net work. That exact
solution provides a reference point for a different question: can the
same mechanism be modified so that the nonlinear term remains active in
a selected region of the domain, while the zero-work property is
retained?

Vorticity concentration refers here to the development of increasingly large vorticity within a
progressively more localised part of the flow. Since the vorticity
$\boldsymbol{\omega}=\bnabla\times\boldsymbol{v}$ is formed from spatial derivatives of the
velocity, growth of its maximum norm signals the formation of strong local velocity gradients. This
is distinct from growth of the total kinetic energy: the energy is a global measure of the velocity
field, whereas peak vorticity is sensitive to local structure. A flow may therefore develop strong
vorticity concentration even while its total kinetic energy is decreasing.

The motivation is to study that concentration without confusing it with direct injection of kinetic
energy by the applied force. One of the questions considered below is precisely whether vorticity
can become strongly concentrated while the total energy continues to decrease.

The computations in this section are intended as a controlled numerical
experiment rather than as evidence of singular behaviour. They also
illustrate an important numerical point: apparent saturation of peak
vorticity can be produced by insufficient spatial resolution.

\subsection{A localised force}

Choosing the force to be $\Uc$, the transverse part of the convective
field introduced in \eqref{eq:UF}, cancels the nonlinearity everywhere.
We now modify that construction so that the cancellation occurs only
outside a prescribed region.

Let $\chi(\boldsymbol{x})$ be a localisation function centred at the origin, which we take at the
centre of the periodic cell, and defined through two radii $r_1<r_2$. With $\alpha=\pi$ the cell
has side $2$, so that the outer ball lies strictly within it and no question of periodic wrapping
arises. It equals unity on the inner ball, decreases smoothly from one to zero across
the transition shell between the two radii, and vanishes outside the outer ball:
\begin{equation}
\label{eq:chi}
\chi(\boldsymbol{x})=
\begin{cases}
1, & s\le r_1,\\[2pt]
\tfrac12\Bigl[1+\cos\Bigl(\pi\dfrac{s-r_1}{r_2-r_1}\Bigr)\Bigr],
  & r_1<s<r_2,\\[6pt]
0, & s\ge r_2,
\end{cases}
\qquad s=\lvert\boldsymbol{x}\rvert .
\end{equation}
The raised-cosine taper is continuous together with its first derivative at both radii, so no
discontinuity is introduced at the edge of the active region and the applied force is continuously
differentiable. It is not of class $C^\infty$: the second derivative jumps at $r_1$ and $r_2$. That has a
consequence worth stating, since the applied force inherits the taper's smoothness and no more. The
Fourier coefficients of a $C^1$ field decay algebraically rather than faster than any power, so the
asymptotic spatial convergence of the localised computations is algebraic, not spectral. It does
not affect the comparisons made here, which are between computations on different grids using the
identical taper, but it does mean that smoothness cannot be invoked in place of the resolution
diagnostic of Section~\ref{sec:method}. A $C^\infty$ bump function would restore spectral
convergence at the cost of repeating every localised computation.
In the computations reported below $r_1=0.6$ and $r_2=0.75$, so that the transition shell has
thickness $0.15$; the single radius quoted in Table~\ref{tab:conc} is $r_1$.

The three regions have distinct roles, shown in Figure~\ref{fig:chi}. Within the inner ball the
transverse nonlinear term remains fully active. Across the transition shell its cancellation is
gradually switched on. Outside the outer ball it is cancelled entirely, as in
Theorem~\ref{thm:main}.

\begin{figure}[ht]
\centering
\includegraphics[width=\textwidth]{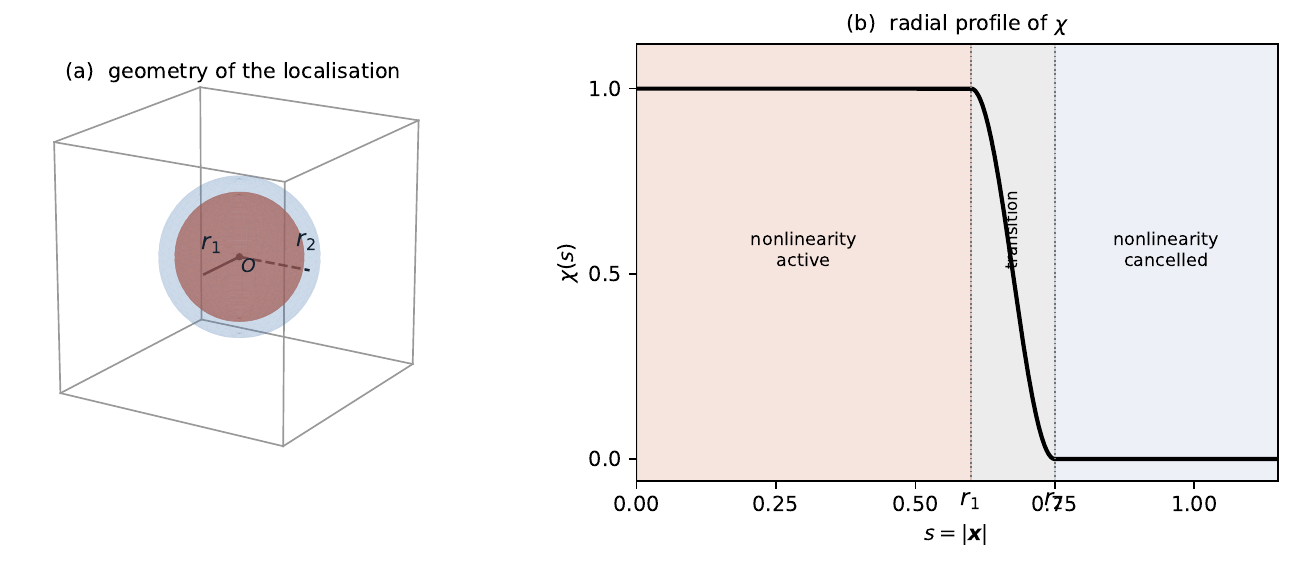}
\caption{Geometry of the localised forcing. (a)~The inner ball $\lvert\boldsymbol{x}\rvert\le r_1$,
centred at the origin within the periodic cell, is the region in which the transverse nonlinear
term remains fully active. Across the surrounding shell $r_1<\lvert\boldsymbol{x}\rvert<r_2$ the
localisation function decreases smoothly from one to zero. Outside $r_2$ the transverse nonlinear
term is cancelled by the applied force. (b)~Radial profile of $\chi$ from \eqref{eq:chi}, with
$r_1=0.6$ and $r_2=0.75$; the raised-cosine taper is continuous together with its first derivative
at both radii.}
\label{fig:chi}
\end{figure}

Consider first the factor $(1-\chi)\Uc$. Inside the inner ball, where
$\chi=1$, this factor vanishes, and hence the nonlinear term is left
unchanged. Outside the outer ball, where $\chi=0$, it becomes $\Uc$ and
cancels the transverse nonlinear term just as in
Theorem~\ref{thm:main}. This suggests the force
$(1-\chi)\Uc$.

There is, however, an important complication. Although the complete
field $\Uc$ satisfies
\[
\langle\Uc\bcdot\boldsymbol{v}\rangle=0,
\]
the truncated field $(1-\chi)\Uc$ does not in general satisfy the same
identity. Multiplication by the spatially varying function $1-\chi$
changes the structure of the field. Consequently, the truncated force
may perform non-zero work on the velocity and thereby inject or remove
kinetic energy.

To eliminate this unwanted energy transfer, we add a component parallel
to the velocity. We therefore define
\begin{equation}
\label{eq:floc}
\boldsymbol{f}
=
(1-\chi)\,\Uc-\beta(t)\,\boldsymbol{v},
\qquad
\beta(t)
=
\frac{\langle(1-\chi)\Uc\bcdot\boldsymbol{v}\rangle}
     {\langle\lvert\boldsymbol{v}\rvert^{2}\rangle}.
\end{equation}
Here $\langle\cdot\rangle$ denotes the spatial average over the periodic cell, as in
\eqref{eq:cell}.

One point must be made at once, because it governs everything that follows. Equation
\eqref{eq:floc} defines the \emph{applied} force. It is not divergence-free: multiplication by the
spatially varying factor $1-\chi$ destroys the solenoidality of $\Uc$, so
$\PT[(1-\chi)\Uc]\neq(1-\chi)\Uc$ in general. What acts on the velocity is therefore not
$\boldsymbol{f}$ but its projection $\PT\boldsymbol{f}$, and the two must be kept apart: the
evolution equation is written in terms of $\PT\boldsymbol{f}$ in \eqref{eq:locproj} below, and it is
$\PT\boldsymbol{f}$ that the computations impose. The zero-work property is unaffected by the
distinction, because $\PT$ is self-adjoint and $\PT\boldsymbol{v}=\boldsymbol{v}$ give
$\langle\PT\boldsymbol{f}\bcdot\boldsymbol{v}\rangle
=\langle\boldsymbol{f}\bcdot\boldsymbol{v}\rangle$; this is shown at \eqref{eq:workproj}.

The role of $\beta(t)$ can be seen directly. Taking the inner product
of \eqref{eq:floc} with $\boldsymbol{v}$ and averaging gives
\[
\langle\boldsymbol{f}\bcdot\boldsymbol{v}\rangle
=
\langle(1-\chi)\Uc\bcdot\boldsymbol{v}\rangle
-
\beta(t)\langle\lvert\boldsymbol{v}\rvert^2\rangle.
\]
Substituting the definition of $\beta(t)$ makes the two terms cancel,
so that
\begin{equation}
\label{eq:zerowork}
\langle\boldsymbol{f}\bcdot\boldsymbol{v}\rangle=0.
\end{equation}
Thus the force is constructed to do exactly zero net work at every
time.

The force \eqref{eq:floc} is not itself divergence-free: multiplication by $1-\chi$ destroys the
solenoidality of $\Uc$, so that in general $\PT[(1-\chi)\Uc]\neq(1-\chi)\Uc$. Only its projection
enters the velocity equation. Since $\PT\boldsymbol{v}=\boldsymbol{v}$, substitution into the
projected equation $\partial_t\boldsymbol{v}=\kappa\Delta\boldsymbol{v}-\Uc+\PT\boldsymbol{f}$ gives
\begin{equation}
\label{eq:locproj}
\frac{\partial\boldsymbol{v}}{\partial t}
=
\kappa\Delta\boldsymbol{v}
-\Uc+\PT\bigl[(1-\chi)\Uc\bigr]
-\beta(t)\boldsymbol{v}
=
\kappa\Delta\boldsymbol{v}
-\PT\bigl[\chi\,\Uc\bigr]
-\beta(t)\boldsymbol{v},
\end{equation}
the second form following from $\Uc=\PT\Uc$. It is \eqref{eq:locproj}, with the projection of the
localised field computed explicitly, that is integrated in the computations below. The
construction is transparent in this form. Before projection, the localisation leaves the transverse
nonlinear term active where $\chi\approx1$ and suppresses it where $\chi\approx0$; the Leray
projection then restores solenoidality. Since $\PT$ is non-local, the projected field does not have
compact support within the inner ball, and the reading above is of the field before projection
rather than after it. The scalar $\beta(t)$ maintains the zero-work condition.

That condition survives the projection. Because $\PT$ is self-adjoint and
$\PT\boldsymbol{v}=\boldsymbol{v}$,
\begin{equation}
\label{eq:workproj}
\langle\PT\boldsymbol{f}\bcdot\boldsymbol{v}\rangle
=\langle\boldsymbol{f}\bcdot\PT\boldsymbol{v}\rangle
=\langle\boldsymbol{f}\bcdot\boldsymbol{v}\rangle
=0,
\end{equation}
so the part of the force that actually acts on the velocity also does no net work, and applying
$\PT$ to the localised field introduces no energy of its own.

The limiting case $\chi\equiv1$ is particularly useful. Then
$1-\chi=0$, so that
\[
\beta(t)=0,
\qquad
\boldsymbol{f}=\boldsymbol{0},
\]
and the ordinary unforced Navier--Stokes equations are recovered.
Consequently, the localised calculation can be compared directly with
an unforced calculation using the same initial condition, Reynolds
number, numerical method, and spatial resolution.

The neutralising term $-\beta(t)\boldsymbol{v}$ is essential. The
orthogonality relation
\[
\langle\boldsymbol{v}\bcdot\Uc\rangle=0
\]
used in Theorem~\ref{thm:main} applies to the complete transverse
convective field. After multiplication by $1-\chi$, that structure is
lost, and there is no reason for
\[
\langle(1-\chi)\Uc\bcdot\boldsymbol{v}\rangle
\]
to vanish.

This effect is visible numerically. Without the neutralising term, the
measured work at $t=0$ for $r_1=0.6$, $r_2=0.75$ is
$+5.6\times10^{-3}$, compared with an energy of $0.11$. Over an interval
of length two, this unintended forcing accounts for an energy increase
of nearly eight per cent. Such an increase would make it difficult to
distinguish genuine concentration from concentration assisted simply by
energy injection.

With the correction in \eqref{eq:floc}, the situation changes
completely. In every corrected run reported below,
\[
\left|
\langle\boldsymbol{f}\bcdot\boldsymbol{v}\rangle
\right|
\le 2.2\times10^{-19},
\]
which is numerically indistinguishable from zero at the precision and the scales of the
computation. (It is well below double-precision machine epsilon, $\approx2.2\times10^{-16}$,
because the absolute scale of the quantity being cancelled is itself small; the figure should not
be read as machine epsilon.) The resulting experiment
therefore isolates redistribution and concentration from direct energy
input by the body force.

\subsection{Method, parameters and resolution control}
\label{sec:method}

\paragraph{Scheme.} The equations are solved pseudo-spectrally on the periodic cell, which for the
computations of this section is taken to be $\Omega=[0,2]^3$, so that the fundamental wavenumber is
$\pi$ and admissible wavevectors are $\boldsymbol{k}\in\pi\mathbb{Z}^3$. The nonlinear term is
evaluated in physical space and dealiased by the two-thirds rule. Time integration uses an explicit
midpoint (second-order Runge--Kutta) scheme together with an integrating factor
$e^{-\kappa\lvert\boldsymbol{k}\rvert^2\Delta t}$ for the viscous term, which is therefore treated
exactly.

The projection $\PT$ is exact on the represented modes, and the integrating factor is exact, so the
error comes from the truncation in space and from the treatment of the nonlinear term in time. For
the unforced computations ($\chi\equiv1$) the solution is smooth and the spatial convergence is
spectral, the error decreasing faster than any power of the grid spacing. For the localised
computations this is not so: the taper \eqref{eq:chi} is only $C^1$, so the applied force has
algebraically decaying Fourier coefficients and the asymptotic spatial convergence of those runs is
algebraic rather than spectral. This does not affect the comparisons made below, which are between
computations on different grids using the identical taper, but it is a further reason not to rely
on smoothness alone and to monitor resolution directly.

Whatever the scheme, the principal limitation is the finite number of retained modes, and no
reformulation removes it---a flow whose gradients steepen will eventually outrun any fixed
truncation. The purpose of the diagnostics below is accordingly not to compensate for a deficient
method but to determine, at each instant, whether the resolution in hand is sufficient for the
quantity being reported. That determination is part of the result: the first finding of
Section~\ref{sec:findings} is precisely that a computation run without such a check yields a peak
vorticity that is an artefact of the truncation rather than a property of the flow.

\paragraph{Reynolds number.} Lengths are scaled so that $\Omega=[0,2]^3$, and the initial field is
normalised so that its peak speed is unity. The Reynolds number is then
\begin{equation}
\label{eq:Re}
\mathcal{R}e=\frac{UL_\ast}{\kappa}=\frac{1}{\kappa},
\qquad U=\max_{\boldsymbol{x}}\lvert\boldsymbol{v}(\boldsymbol{x},0)\rvert=1,
\qquad L_\ast=1,
\end{equation}
the reference length $L_\ast$ being one half of the period. Equivalently, every computation reported
here sets $\kappa=1/\mathcal{R}e$ with the normalisations just stated.

\paragraph{Initial condition.} The initial field is not the analytical family \eqref{eq:family}; it
is a generic smooth solenoidal field, chosen so that the concentration study is not reporting a
property peculiar to the exact solution. It is constructed as follows. For each wavevector
$\boldsymbol{k}\in\pi\mathbb{Z}^3$ on the two shells
\begin{equation}
\label{eq:shells}
\lvert\boldsymbol{k}\rvert^2/\pi^2 = 2
\qquad\text{and}\qquad
\lvert\boldsymbol{k}\rvert^2/\pi^2 = 6,
\end{equation}
taken once per conjugate pair $\{\boldsymbol{k},-\boldsymbol{k}\}$, an orthonormal pair
$\boldsymbol{e}_1(\boldsymbol{k})$, $\boldsymbol{e}_2(\boldsymbol{k})$ spanning
$\boldsymbol{k}^{\perp}$ is constructed deterministically, and
\begin{equation}
\label{eq:ic}
\widehat{\boldsymbol{v}}(\boldsymbol{k},0)
= c_{\lvert\boldsymbol{k}\rvert}\bigl[(a_1+\mathrm{i}b_1)\boldsymbol{e}_1
+(a_2+\mathrm{i}b_2)\boldsymbol{e}_2\bigr],
\qquad
\widehat{\boldsymbol{v}}(-\boldsymbol{k},0)=\overline{\widehat{\boldsymbol{v}}(\boldsymbol{k},0)},
\end{equation}
with $a_1,b_1,a_2,b_2$ independent standard normal variates, $c=1$ on the first shell and $c=1/2$ on
the second. Placing the coefficients in $\boldsymbol{k}^\perp$ makes the field solenoidal by
construction, and the conjugate condition makes it real. The result is projected, dealiased, and
finally rescaled so that $\max_{\boldsymbol{x}}\lvert\boldsymbol{v}(\boldsymbol{x},0)\rvert=1$. The
variates are drawn from the Mersenne Twister generator seeded with the value $17$, and the basis
$\boldsymbol{e}_1,\boldsymbol{e}_2$ is built by explicit orthogonalisation rather than by an
eigenvalue routine, so that the same initial condition is obtained on different platforms. All
computations in this section, at every grid and every Reynolds number, start from this same field.
The same Fourier coefficients \eqref{eq:ic} are generated at every $N$, from the same seed and in
the same order, before any grid-dependent step is applied, so the spectral content of the initial
field is independent of the grid. The normalisation and the dealiasing mask are not,
so the realised initial energy
$E(0)=\tfrac12\langle\lvert\boldsymbol{v}\rvert^2\rangle$ varies in the fourth decimal place with
resolution, from $0.1098$ at $64^3$ to $0.1095$ at $128^3$. The initial peak vorticity is $6.28$ on
every grid used, and all amplification factors quoted below are referred to it.

\paragraph{Time step.} The time step is fixed within each run and set by
\begin{equation}
\label{eq:dt}
\Delta t = 0.006\,\frac{32}{N}\left(\frac{1000}{\mathcal{R}e}\right)^{1/2},
\end{equation}
$N$ being the number of grid points per direction; this gives $\Delta t=3.00\times10^{-3}$ at
$N=64$, $\mathcal{R}e=1000$, and $\Delta t=8.66\times10^{-4}$ at $N=128$, $\mathcal{R}e=3000$. The
rule is deliberately conservative, scaling with the grid spacing and with the viscous time. Since
the conclusions below turn on distinguishing numerical artefacts from physical behaviour, the
temporal error was checked directly by refinement; the result is reported in
Section~\ref{sec:findings}.

\paragraph{Spectral content and the tail diagnostic.} The initial field occupies only the two
shells \eqref{eq:shells}, so the computation begins with a smooth, spectrally compact velocity field
containing no initially populated high-wavenumber modes. As the flow evolves, nonlinear
interactions transfer activity to higher wavenumbers, corresponding to increasingly fine spatial
structure. This matters particularly for the vorticity, since taking a spatial derivative weights
each Fourier mode by its wavenumber: the maximum vorticity is therefore more sensitive than the
velocity itself to the development of poorly resolved small scales. A computation may look
perfectly smooth in physical space while already lacking the spectral resolution to represent its
largest gradients.

Resolution is therefore monitored throughout each run. Let
\begin{equation}
\label{eq:shellE}
E_s=\tfrac12\!\!\sum_{s-\frac12\,\le\,\lvert\boldsymbol{k}\rvert/\pi\,<\,s+\frac12}\!\!
\lvert\widehat{\boldsymbol{v}}(\boldsymbol{k})\rvert^2 ,
\qquad s=1,\dots,S,\quad S=N/2,
\end{equation}
be the kinetic energy in the unit-width radial shell centred on $\lvert\boldsymbol{k}\rvert=s\pi$.
The spectral tail is then
\begin{equation}
\label{eq:tail}
T=\frac{\displaystyle\sum_{s>0.85\,S}E_s}{\displaystyle\max_{1\le s\le S}E_s},
\end{equation}
the numerator running over the highest fifteen per cent of the radial shells and the denominator
being the largest shell energy, that is, the spectral peak. A small $T$ indicates that the
high-wavenumber end of the spectrum carries little energy and is still well separated from the
dominant scales; conversely, if appreciable energy reaches the highest wavenumbers available on the
grid, the computation is approaching the smallest scales it is capable of representing.

We use $T=10^{-4}$ as the level beyond which a computation is regarded as under-resolved. The value
is a working convention rather than a derived bound, and it is necessary but not sufficient: the
$96^3$ case below satisfies it and is nonetheless some twenty per cent away from its refined value.
For that reason we distinguish in what follows between a run that \emph{passes the spectral-tail
criterion} and one that is \emph{grid-converged}, reserving the latter for cases supported by
explicit refinement. The conclusions below rest on direct comparison between grids wherever such a
comparison is available, with the tail used only to locate the intervals over which a single
computation can be trusted.

\begin{table}[ht]
\centering
\begin{tabular}{rrlrrr}
\toprule
grid & $\mathcal{R}e$ & force &
peak $\lVert\boldsymbol{\omega}\rVert_\infty$ &
at $t$ & tail max \\
\midrule
$32^3$  & $200$  & zero-work localised, $r_1=0.6$ & $7.02$  & $1.24$ & $2.3\times10^{-6}$ \\
$32^3$  & $200$  & none, $\chi\equiv1$            & $7.31$  & $1.74$ & $2.0\times10^{-6}$ \\
$32^3$  & $400$  & zero-work localised, $r_1=0.6$ & $10.47$ & $1.82$ & $4.9\times10^{-5}$ \\
$32^3$  & $400$  & none, $\chi\equiv1$            & $12.19$ & $2.06$ & $1.1\times10^{-4}$ \\
$128^3$ & $3000$ & zero-work localised, $r_1=0.6$ & $79.17$ & $3.16$ & $8.1\times10^{-6}$ \\
$128^3$ & $3000$ & none, $\chi\equiv1$            & $81.98$ & $3.57$ & $2.4\times10^{-4}$ \\
\bottomrule
\end{tabular}
\caption{Zero-work computations: peak vorticity attained, from an initial value of $6.28$. Every
run in this table uses the corrected force \eqref{eq:floc}, for which
$\langle\boldsymbol{f}\bcdot\boldsymbol{v}\rangle=0$ and the energy decreases monotonically, or no
force at all. The tail maximum is the diagnostic \eqref{eq:tail}; entries for which it exceeds
$10^{-4}$ do not pass the criterion and are reported for completeness. These are the computations on
which the conclusions of Section~\ref{sec:findings} rest.}
\label{tab:conc}
\end{table}

\begin{table}[ht]
\centering
\begin{tabular}{rrrrr}
\toprule
grid & $\mathcal{R}e$ &
peak $\lVert\boldsymbol{\omega}\rVert_\infty$ &
at $t$ & tail max \\
\midrule
$32^3$  & $200$  & $9.52$  & $0.73$ & $1.7\times10^{-6}$ \\
$64^3$  & $1000$ & $36.99$ & $2.72$ & $1.5\times10^{-5}$ \\
$96^3$  & $3000$ & $74.40$ & $3.93$ & $8.2\times10^{-5}$ \\
$128^3$ & $3000$ & $89.69$ & $3.01$ & $1.1\times10^{-5}$ \\
\bottomrule
\end{tabular}
\caption{Diagnostic calculations, retained only for the grid-refinement comparison of
Section~\ref{sec:findings}. These used the uncorrected localised force $(1-\chi)\Uc$, which
performs work on the fluid and does not hold the energy budget fixed. They are therefore not
comparable with Table~\ref{tab:conc} and carry none of the physical conclusions; the amplification
figures quoted in the text are taken from Table~\ref{tab:conc} alone.}
\label{tab:diag}
\end{table}

\subsection{Findings}
\label{sec:findings}

The computations establish three observations, taken in turn below: that loss of resolution can
mimic physical saturation; that resolved peak-vorticity amplification strengthens over the
Reynolds-number range examined; and that the localised forcing has not been shown to enhance
concentration systematically relative to the unforced dynamics. A fourth remark concerns what the
construction contributes methodologically.

\paragraph{Grid refinement shows that apparent saturation can be numerical.}

A maximum in the computed $\lVert\boldsymbol{\omega}\rVert_\infty$ invites the reading that
viscosity has overcome the concentrating action of the nonlinearity. It is necessary to distinguish
an \emph{observed numerical turnover} from a \emph{demonstrated physical saturation}: the former is
a property of a particular discretisation, the latter of the flow. A turnover seen on one grid is
not sufficient evidence for the second.

At $32^3$ and $\mathcal{R}e=1000$ the peak vorticity rises to $15.9$ near $t\approx1.1$ and then
declines, so that on that grid the computation appears to saturate. At $64^3$ the same
configuration reaches $25.6$ by $t=2$ and is still increasing; continued, it attains $36.99$ before
turning over near $t\approx2.7$, with the spectral tail never exceeding $1.5\times10^{-5}$. The
turnover on the coarser grid is therefore attributable to insufficient resolution. This is not to say that
physical saturation does not occur---the $64^3$ computation does turn over while remaining
resolved---but that a decline in computed peak vorticity carries no weight unless accompanied by a
diagnostic showing the small scales are still resolved.

The same caution applies at $\mathcal{R}e=3000$, where the evidence is a direct grid comparison.
The $96^3$ computation reaches a spectral tail of $8.2\times10^{-5}$, below the stated threshold of
$10^{-4}$, so that on the spectral criterion alone it would be accepted. Refining to $128^3$
nevertheless raises the peak vorticity from $74.40$ to $89.69$, an increase of twenty per cent,
with the tail falling to $1.1\times10^{-5}$. A computation that passes the threshold narrowly can
therefore still be materially under-resolved, and the threshold is best treated as a necessary
rather than a sufficient condition. We note that both of these computations used the uncorrected
force and gained energy by the same $7.7$ per cent; that gain is the work done by the force and not
a symptom of resolution, and we do not use it as one. Both comparisons in this paragraph---$32^3$
against $64^3$ at $\mathcal{R}e=1000$, and $96^3$ against $128^3$ at $\mathcal{R}e=3000$---are
drawn from Table~\ref{tab:diag}. They are used here solely as statements about resolution, for
which the energy budget is irrelevant, and each compares like with like.

\paragraph{The temporal error is negligible at the time steps used.}

Since these conclusions turn on separating numerical artefacts from physical behaviour, the fixed
step \eqref{eq:dt} was tested by direct refinement. The zero-work localised computation at $64^3$
and $\mathcal{R}e=1000$ was repeated from the identical initial condition with $\Delta t$,
$\Delta t/2$ and $\Delta t/4$, all other parameters held fixed. The peak vorticity attained was
$32.6760$, $32.6766$ and $32.6765$ respectively, each at $t=2.85$, with the largest spectral tail
$1.0\times10^{-5}$ in all three. The relative change is $1.6\times10^{-5}$ on the first refinement
and $1.2\times10^{-6}$ on the second, and the terminal energy agrees to six significant figures.
The temporal discretisation error is therefore some four orders of magnitude smaller than the
twenty per cent spatial difference between $96^3$ and $128^3$ discussed above. The discrepancies
this section is concerned with are spatial, not temporal, and \eqref{eq:dt} is amply conservative.

\paragraph{Resolved concentration strengthens over the Reynolds-number range examined.}

Within the zero-work computations, for which \eqref{eq:zerowork} holds and the kinetic energy
decreases monotonically, the amplification of peak vorticity relative to its initial value of
$6.28$ is $1.12$ at $\mathcal{R}e=200$, $1.67$ at $\mathcal{R}e=400$ and $12.61$ at
$\mathcal{R}e=3000$, the last on the $128^3$ grid with a spectral tail of $8.1\times10^{-6}$.
Figure~\ref{fig:vort} shows that computation directly.

One qualification attaches to the last of these figures, and it follows from the caution of the
preceding paragraphs. The $\mathcal{R}e=3000$ zero-work computation passes the spectral-tail
criterion comfortably, but it has not itself been refined: the $96^3$--$128^3$ comparison that
showed the criterion to be insufficient was made with the diagnostic force, not this one. The
figure $12.61$ should therefore be read as the amplification attained on the $128^3$ grid, and as a
lower bound on the refined value rather than a grid-converged one. A computation on a finer grid
would settle it, and the script that performs it accompanies the paper.

\begin{figure}[ht]
\centering
\includegraphics[width=\textwidth]{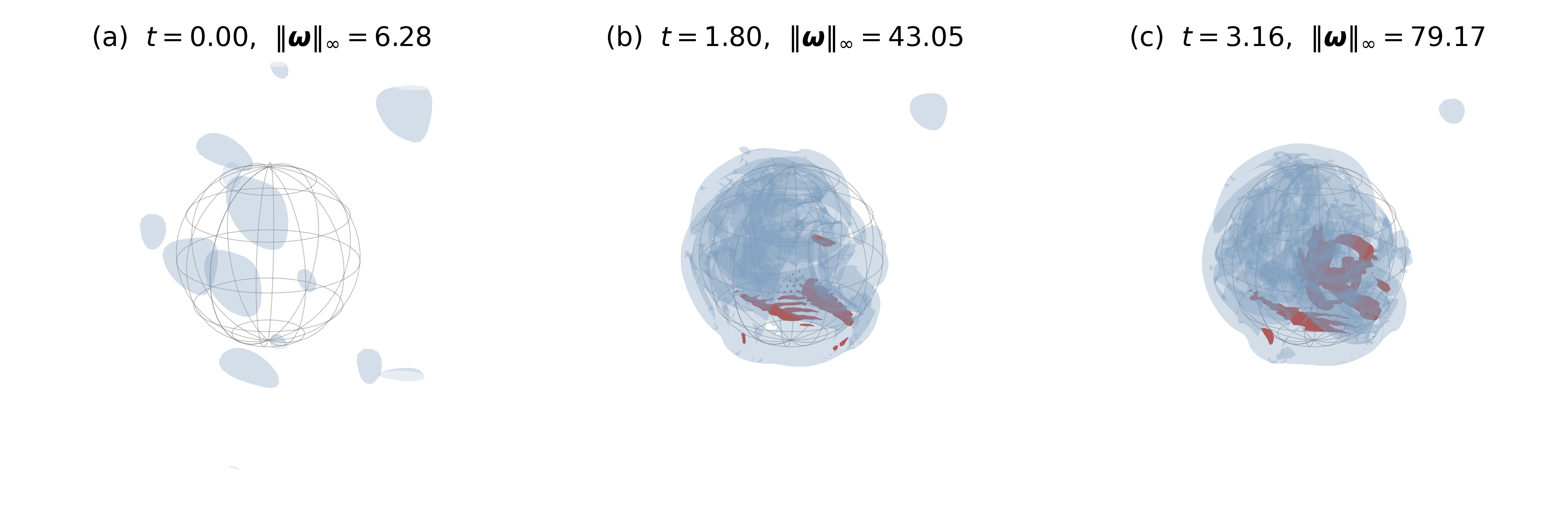}
\caption{Vorticity magnitude in the zero-work localised computation at $\mathcal{R}e=3000$ on the
$128^3$ grid, at three times. All three panels use the same camera, the same box and the same two
\emph{absolute} isosurface levels, $\lvert\boldsymbol{\omega}\rvert=5$ (outer, translucent) and
$\lvert\boldsymbol{\omega}\rvert=20$ (inner, opaque); no panel is normalised to its own maximum, so
the growth seen between them is the growth of the field. The higher level is absent at $t=0$,
where the peak is only $6.28$, and appears as compact structures at the later times. The faint
wireframe marks the sphere $\lvert\boldsymbol{x}\rvert=r_1$, within which the nonlinearity remains
active. The peak value of each panel is given above it; the field is drawn on the $128^3$ data
reduced to $64^3$ for plotting, the quoted maxima being those of the full-resolution field.}
\label{fig:vort}
\end{figure} The
time of the peak moves later as the Reynolds number rises. The corresponding figures for
Table~\ref{tab:diag}---$1.52$, $5.89$ and $14.28$---belong to the diagnostic computations, in which
the force was not corrected and the energy budget was not controlled; they are retained only for
the grid-refinement comparison and are not evidence for the present statement.

The distinction matters because the increase of $\lVert\boldsymbol{\omega}\rVert_\infty$ in the
zero-work series occurs while the total kinetic energy decreases. Stronger gradients are formed as
the energy falls, so the phenomenon is properly described as concentration rather than as energy
injection. We claim this for the Reynolds numbers and resolutions examined, not as a statement
about arbitrary $\mathcal{R}e$.

\paragraph{No systematic enhancement by localisation is established.}

The localised force was introduced to test whether leaving the nonlinearity active only within a
selected region produces stronger concentration than the unforced dynamics. The corrected,
zero-work computations do not establish such an enhancement. At $\mathcal{R}e=3000$ and $128^3$ the
localised run with $r_1=0.6$ reaches a reported maximum of $79.17$, against $81.98$ reported for
the unforced run. The same ordering of reported maxima is found at $\mathcal{R}e=200$ and $400$ on
the coarser grid, although at $\mathcal{R}e=400$ the unforced run has a spectral tail of
$1.1\times10^{-4}$, marginally beyond the threshold, so that comparison is not fully resolved
either.

An earlier apparent advantage for localisation arose from the uncorrected force, which performed
positive work, so that the two cases were not compared at equal energy budgets. Once
\eqref{eq:zerowork} is imposed, that twenty-per-cent advantage disappears.

The $\mathcal{R}e=3000$ comparison is more interesting than this summary suggests, and the
distinction between a \emph{reported} maximum and a \emph{resolved} maximum is essential to it. The
unforced computation develops a spectral tail of $2.4\times10^{-4}$ from about $t=3.2$, so its
reported maximum at $t=3.57$ lies in an under-resolved interval and is not a resolved maximum at
all. Comparing instead at $t=3.0$, where both computations satisfy the resolution criterion, the
localised run has $78.3$ against $70.3$ for the unforced run; the curves cross thereafter. During
the interval in which both are resolved, therefore, localisation produces the greater
concentration. Which evolution attains the larger \emph{resolved} maximum at $\mathcal{R}e=3000$
cannot be determined from the present computations. The results therefore do not establish a
systematic enhancement of concentration by localisation.

\paragraph{The principal value of the construction here is methodological.}

The resolved computations reported here provide evidence of strong vorticity concentration, but not
of singular behaviour; a finite computation cannot establish the absence of singular behaviour
beyond the interval over which it is resolved. In the zero-work cases the peak vorticity rises by a
factor of at least $12.61$ at $\mathcal{R}e=3000$, and where the subsequent decline occurs while the
computation remains adequately resolved, as at $\mathcal{R}e=1000$ on the $64^3$ grid, it is a
decline of the flow and not of the discretisation.

What Theorem~\ref{thm:main} supplies in this setting is the ability to manipulate the transverse
convective field while controlling the energy transferred by the applied force exactly. The
identity \eqref{eq:zerowork} separates changes produced by nonlinear redistribution from changes
produced by external energy input, and it is this separation, rather than any particular outcome of
the localisation experiment, that the construction contributes.

Table~\ref{tab:conc} may therefore be read as a reference set for further concentration studies.
An alternative forcing or localisation scheme can be compared against the same quantities: peak
vorticity, the time at which it occurs, the evolution of kinetic energy, the applied work, and the
high-wavenumber resolution diagnostic.

\section{Use as a benchmark}
\label{sec:bench}

The fields of Theorem~\ref{thm:main} provide a benchmark for numerical solvers of the incompressible
Navier--Stokes equations, and particularly for solvers in higher dimensions. They have four
properties that make them useful for this purpose. Velocity, pressure and body force are all known
in closed form. The time dependence is a pure exponential, known exactly. The energy decay is known
exactly, because the forcing does no net work. And the construction is available in every dimension
$n\ge3$, with no upper bound. The last is the principal motivation: solvers intended for $n>3$ have
few exact solutions against which to be tested, and those of \citet{tha2026} exist, over the range
that paper classifies, only at $n=3$ and $4$.

Two of these properties belong to Theorem~\ref{thm:main} itself and hold for any admissible field:
the exponential time dependence and the exact energy decay. The closed forms for the pressure and
the body force, and the numerical value of the transverse ratio quoted below, belong to the
particular family \eqref{eq:family}, for which Section~\ref{sec:dims} establishes them. A solver
tested against this benchmark is therefore tested against that family; the first two diagnostics
would apply to any other eigenfunction put through the same construction.

The zero-work property deserves particular emphasis. Because the energy must follow
$E(0)e^{-2\lambda\kappa t}$ exactly, a solver cannot reproduce the correct velocity while quietly
introducing spurious energy through its implementation of the forcing; any such error appears
directly in the energy.

Four diagnostics are natural, and they test different parts of a solver:
\begin{itemize}
\item the solution error $\lVert\boldsymbol{v}_{\mathrm{num}}-\boldsymbol{v}_{\mathrm{exact}}\rVert$,
in $L^2$ and, if wanted, $L^\infty$, which tests the solver as a whole;
\item the divergence $\lVert\bnabla\bcdot\boldsymbol{v}_{\mathrm{num}}\rVert$, which tests the
enforcement of incompressibility;
\item the energy $E(t)$ against $E(0)e^{-2\lambda\kappa t}$, which tests the viscous and temporal
evolution together with the forcing;
\item the ratio $\lVert\PT\boldsymbol{g}\rVert/\lVert\boldsymbol{g}\rVert$ against $2\sqrt2/3$, which
tests the evaluation of the nonlinear term and the projection.
\end{itemize}
The pressure is also known exactly, so a pressure solver can be tested independently, complementing
the more indirect test that the transverse ratio provides.

Mathematically, $\bnabla\bcdot\boldsymbol{v}=0$, $E(t)=E(0)e^{-2\lambda\kappa t}$ and the ratio
equals $2\sqrt2/3$ exactly. Numerically, the relevant question is how quickly the computed values
approach these as the discretisation is refined. A spectral projection method will typically hold
the divergence near machine precision, but other legitimate discretisations enforce
incompressibility differently, and for those the divergence should instead converge to zero at the
rate the method implies. The transverse ratio is especially useful here: unlike a residual that
need only become small, it is a specific non-zero target, independent of dimension and phase, that
a projection algorithm must reproduce. Because the solution is exact, computations on successively
finer grids or time steps also allow the observed order of convergence to be measured directly.

A solver validated in this way is on firmer ground when applied to flows in which gradients steepen,
such as those of Section~\ref{sec:conc}, where the resolution diagnostics become essential. The
same exact solution can therefore test incompressibility, viscous decay, nonlinear evaluation,
pressure projection, forcing implementation and convergence, across dimension. An exact benchmark
is an unforgiving one: with the velocity, pressure, forcing, energy decay and transverse ratio
all prescribed, a numerical error has relatively few places to hide.

\section{Reproducibility}
\label{sec:repro}

Three scripts accompany the paper, and they serve different purposes.

The first symbolically verifies the analytical results of Sections~\ref{sec:family}
to~\ref{sec:dims}: the solenoidality and eigenfunction property of \eqref{eq:family}, the
three-dimensional pressure \eqref{eq:p3} and body force \eqref{eq:U3}, the closed forms
\eqref{eq:scaling} together with their independence of the phases, and the residual of the exact
forced solution. These checks are independent confirmations; the lemmas and the theorem are
established by the proofs given in the text, not by the computations. The residual is obtained by
substituting the exact $\boldsymbol{v}$ and $\boldsymbol{f}$ of \eqref{eq:soln} into the projected
equation \eqref{eq:proj} and measuring what remains in the cell-averaged $L^2$ norm; the convective
term is recomputed from $\boldsymbol{v}(t)$ at each time rather than rescaled, so the test is not
circular. The residual does not exceed $1.3\times10^{-16}$, at the level of floating-point
round-off, at $t=0$, $0.5$, $1$, $2$ and $5$, for the field \eqref{eq:v03} and for members of
\eqref{eq:family} at $n=3$ and $n=4$ with randomly drawn phases; the applied work vanishes
identically in every case. The closed form \eqref{eq:gclosed} of Lemma~\ref{lem:gclosed} and the
divergence identity \eqref{eq:divgn} are confirmed for $3\le n\le8$ with $\alpha$, $v_r$ and
$\boldsymbol{\xi}$ symbolic; the norms \eqref{eq:scaling} and the pressure \eqref{eq:pn} are checked
for $3\le n\le12$ in exact rational arithmetic, at zero, symmetric and randomly drawn phase
vectors. The Fourier-support disjointness on which Lemma~\ref{lem:gclosed} and
Proposition~\ref{prop:norms} rest---between the three terms of \eqref{eq:gclosed}, and between the
$n$ summands of \eqref{eq:pn}---is checked by listing the supports explicitly for $3\le n\le12$, so
that the sign choices and the cyclic index coincidences, which are at their most severe at $n=3$,
are enumerated rather than assumed. The script also confirms the translation identity \eqref{eq:shift} for $3\le n\le8$, and
verifies for $3\le n\le6$---with
$\alpha$, $v_r$, $\rho$, $\kappa$ and $\boldsymbol{\xi}$ all carried as free symbols---that the
space-time solution \eqref{eq:solnn} satisfies the momentum equation of \eqref{eq:ns} identically,
with $\bnabla\bcdot\boldsymbol{v}=\bnabla\bcdot\boldsymbol{f}=0$.

The second script reproduces the numerical computations summarised in Table~\ref{tab:conc}. It
implements the scheme of Section~\ref{sec:conc}, including the localised force \eqref{eq:floc} with
its neutralising term, and reports at every output step the kinetic energy, the enstrophy, the peak
vorticity, the spectral-tail resolution diagnostic and the applied work
$\langle\boldsymbol{f}\bcdot\boldsymbol{v}\rangle$. Since the conclusions of that section depend on
distinguishing resolved from under-resolved intervals, the resolution diagnostic is part of the
result rather than an implementation detail. The initial field is generated deterministically from
a fixed seed, as set out in Section~\ref{sec:method}, so that the same initial condition can be
reproduced across platforms; the orthonormal basis of $\boldsymbol{k}^{\perp}$ in \eqref{eq:ic} is
built by explicit orthogonalisation rather than by an eigenvalue routine for the same reason. The
time step follows \eqref{eq:dt} and is not adaptive.

A third script, \texttt{refine160.py}, repeats the $\mathcal{R}e=3000$ zero-work computation on a
finer grid from the identical initial condition, so that the $128^3$ peak of Table~\ref{tab:conc}
can be checked against a refined value.

The scripts are named \texttt{verify\_force.py}, \texttt{concentration.py} and
\texttt{refine160.py}. They require Python~3.8 or later, with NumPy for the second and third and
SymPy in addition for the first, and read no data files. The run times of the higher-resolution computations in
Table~\ref{tab:conc} depend strongly on the hardware used; the $128^3$ computations are the
expensive ones.

\section{Concluding remarks}
\label{sec:concl}

For any smooth solenoidal periodic field that is a Laplacian eigenfunction, the transverse part of
its convective field, decaying at twice the viscous rate, is a body force for which that field is an
exact solution of the forced Navier--Stokes equations. The mechanism is a clean division of labour:
the pressure absorbs the longitudinal part of the convective field, the body force balances the
transverse part, and the forcing does no net work.

Applied to the $n$-parameter family \eqref{eq:family}, this yields exact forced solutions for every
phase vector and every dimension $n\ge3$, with velocity, pressure and force in closed form. There
is no upper bound on $n$: the pressure, the magnitudes and the non-vanishing of the forcing are
proved for all dimensions at once, not established case by case, because the construction's finite
interaction range makes each component the same function of five coordinates however large $n$ is.
The
forcing has magnitude $\lVert\Uc^0\rVert=\alpha v_r^2\sqrt{n/6}$, which for fixed $\alpha$ and
$v_r$ grows only as $\sqrt{n}$, and stands to the whole convective field in the fixed ratio of
$L^2$ magnitudes $2\sqrt2/3$, independently of $n$, of $\boldsymbol{\xi}$ and of the units. These fields
therefore supply an exact benchmark for incompressible solvers, and particularly for those intended
for $n>3$, where exact solutions are otherwise scarce: with velocity, pressure, force, energy decay,
incompressibility and the transverse ratio all known exactly, a method can be checked against
several independent quantities rather than one.

Used to modify rather than balance the convective field, the same apparatus gives a controlled
setting in which concentration can be studied with the energy budget held exactly. The zero-work
computations of Section~\ref{sec:conc} reach an amplification of at least $12.61$ at
$\mathcal{R}e=3000$ and
show that an apparent saturation of that growth can shift or disappear under refinement. They do
not establish a systematic enhancement of concentration by localising the nonlinearity.

The construction inverts the search that produced \citet{tha2026}. There the phase space is examined
for the rare assignments at which the transverse part vanishes. Here it is retained, and every phase
assignment in every dimension is productive. What obstructs the unforced problem is what solves the
forced one.

\section*{Ethics}
This work did not involve human participants, animal subjects, or fieldwork, and no ethical
approval was required.

\section*{Data accessibility}
No experimental or measured datasets were generated or analysed. All analytical results are
reproducible from the symbolic verification script, and all computational results from the
solver described in Section~\ref{sec:repro}; all three scripts are supplied as electronic
supplementary material.

\section*{Conflict of interest declaration}
The author declares no competing interests.

\section*{Funding}
No funding was received for this study.

\end{document}